\setlist[itemize]{leftmargin=*,nosep}
\setlist[enumerate]{leftmargin=*,nosep}
\setlist[description]{leftmargin=1em,nosep}
\setlist{nolistsep}
\newcommand{\xmark}{\ding{55}}
\newtheorem{lemma}{Lemma}
\newcommand{\unsim}{\mathord{\sim}}
\definecolor{revisionteal}{RGB}{135,174,174}
\definecolor{revisionrone}{RGB}{88,109,147}
\definecolor{revisionrtwo}{RGB}{103,148,106}
\definecolor{revisionrthree}{RGB}{124,115,174}
\definecolor{revisionrfour}{RGB}{181,85,85}
\newcommand{\redx}{\textcolor{red}{\xmark{}}}
\definecolor{nicergreen}{RGB}{57,151,92}
\newcommand*{\boldcheckmark}{%
  \textpdfrender{
    TextRenderingMode=FillStroke,
    LineWidth=.75pt, 
  }{\checkmark}%
}
\newcommand{\greencheck}{\textcolor{nicergreen}{\boldcheckmark{}}}
  \providecommand\BibTeX{{%
    \normalfont B\kern-0.5em{\scshape i\kern-0.25em b}\kern-0.8em\TeX}}}
\begin{document}

\title{Clairvoyant Prefetching for Distributed Machine Learning I/O}

%
%
%

\author{Nikoli Dryden, Roman B\"{o}hringer, Tal Ben-Nun, Torsten Hoefler}
\thanks{Correspondence to: \href{mailto:ndryden@ethz.ch}{ndryden@ethz.ch}}
\affiliation{%
  \department{Department of Computer Science, ETH Z\"{u}rich, Switzerland}
}


\begin{abstract}
  I/O is emerging as a major bottleneck for machine learning training, especially in distributed environments.
Indeed, at large scale, I/O takes as much as 85\% of training time.
Addressing this I/O bottleneck necessitates careful optimization, as optimal data ingestion pipelines differ between systems, and require a delicate balance between access to local storage, external filesystems, and remote nodes.
We introduce NoPFS, a machine learning I/O middleware, which provides a scalable, flexible, and easy-to-use solution to the I/O bottleneck.
NoPFS uses \emph{clairvoyance}: Given the seed generating the random access pattern for training with SGD, it can exactly predict when and where a sample will be accessed.
We combine this with an analysis of access patterns and a performance model to provide distributed caching policies that adapt to different datasets and storage hierarchies.
NoPFS reduces I/O times and improves end-to-end training by up to $5.4\times$ on the ImageNet-1k, ImageNet-22k, and CosmoFlow datasets.

\end{abstract}

%



\maketitle

\vspace{-1em}
\section{Introduction}
\label{sec:intro}
\begin{table}[t]\footnotesize\fontsize{6.75}{\baselineskip}\selectfont
  \centering
  \setlength{\tabcolsep}{1.5pt}
  \begin{tabular}{lccccc}
    \toprule
    \multirow{2}{*}{Approach} & System & Dataset & Full & Hardware & Ease \\
    & scalability & scalability & randomization & independence & of use \\
    \midrule
    Double-buffering & \multirow{2}{*}{\redx{}} & \multirow{2}{*}{\greencheck{}} & \multirow{2}{*}{\greencheck{}} & \multirow{2}{*}{\redx{}} & \multirow{2}{*}{\greencheck{}} \\
    (e.g., PyTorch~\cite{paszke2019pytorch}) & & & & & \\
    \texttt{tf.data}~\cite{tensorflow2015-whitepaper,murray2021tf} & \redx{} & \greencheck{} & \redx{} & \redx{} & \greencheck{} \\
    Data sharding (e.g.,~\cite{kurth2018exascale}) & \greencheck{} & \redx{} & \redx{} & \redx{} & \greencheck{} \\
    DeepIO~\cite{zhu2018entropy} & \greencheck{} & \redx{} & \redx{} & \redx{} & \greencheck{} \\
    LBANN data store~\cite{jacobs2019parallelizing,oyama2020case} & \greencheck{} & \redx{} & \greencheck{} & \redx{} & \redx{} \\
    Locality-aware loading~\cite{yang2019accelerating} & \greencheck{} & \greencheck{} & \greencheck{} & \redx{} & \redx{} \\
    \textbf{NoPFS (this paper)} & \greencheck{} & \greencheck{} & \greencheck{} & \greencheck{} & \greencheck{} \\
    \bottomrule
  \end{tabular}
  \caption{Comparison of I/O frameworks.}
  \label{tab:io-frameworks}\vspace{-3em}
\end{table}

Training deep neural networks (DNNs) is an increasingly important workload on supercomputers, as deep learning (DL) is adopted by more fields.
Given the high cost of training, it is critical that every aspect be as efficient as possible~\cite{openai2018aiandcompute,strubell2019energy}.
Extensive work has been done to optimize training~\cite{bennun2019demystifying}, including dedicated hardware~\cite{jouppi2017datacenter,markidis2018nvidia}, compilers~\cite{xla,chen2018tvm}, optimizing operator primitives~\cite{chetlur2014cudnn,ivanov2021data}, and communication infrastructure~\cite{nccl,dryden2018aluminum,sergeev2018horovod,awan2019optimized,awan2017s}.

From the perspective of a DL framework, training a DNN involves three aspects: computation to execute the DNN; communication, to synchronize updates across nodes; and I/O, which provides the data and labels for training to each node.
The vast majority of work on optimizing training has focused on computation and communication.
Consequently, the performance bottleneck in training is shifting to I/O~\cite{pumma2019scalable,murray2021tf}.
Indeed, we find that when training ResNet-50~\cite{he2016deep} on ImageNet~\cite{deng2009imagenet} at scale, up to 85\% of runtime is I/O overhead, and we observe similar trends in other datasets.
As trends in compute capability continue with improving machine learning accelerators and datasets reach hundreds of millions~\cite{sun2017revisiting} to billions~\cite{mahajan2018exploring} of samples and terabytes~\cite{mathuriya2018cosmoflow,abu2016youtube,oyama2020case} to petabytes~\cite{abodo2018detecting} in size, this I/O bottleneck will only be exacerbated.

It is challenging to optimize training I/O, as stochastic gradient descent (SGD) randomly accesses (typically small) data samples.
This problem is especially acute for distributed training, where shared filesystem contention can be detrimental to performance.
Existing frameworks often overlap I/O with computation to reduce its overhead, but this is no longer sufficient.
Beyond this, ad hoc solutions such as limited lookahead and double-buffering~\cite{tensorflow2015-whitepaper,paszke2019pytorch,chien2018characterizing}, data sharding~\cite{goyal2017accurate,kurth2018exascale}, prestaging and in-memory caching~\cite{jacobs2019parallelizing,oyama2020case}, or modified access patterns~\cite{yang2019accelerating,zhu2018entropy} are used.
These have significant limitations, including poor scalability, requiring extra hardware, neglecting parts of the storage hierarchy, or deviating from full-dataset randomization.
All of these approaches can fail to fully utilize a machine's I/O subsystem (Tab.~\ref{tab:io-frameworks}, Sec.~\ref{sec:background}).

To address the I/O bottleneck, we introduce a new I/O middleware framework, the Near-optimal PreFetching System, NoPFS.\@
The key idea behind NoPFS is to use \emph{clairvoyance}~\cite{belady1966study}: Given the seed for the pseudorandom number generator (PRNG) that generates an access stream, we know exactly which process will access a given sample when, arbitrarily far in the future.
NoPFS analyzes the access stream to perform integrated prefetching and caching, rather than always reading from storage (Sec.~\ref{sec:access-patterns}).
It combines this with a performance model-driven distributed caching policy that uses both on-node storage hierarchies (e.g., RAM, node-local SSDs) and distributed memory (Secs.~\ref{sec:perfmodel},~\ref{sec:nopfs}).
This results in much-improved I/O performance, and overall improvements in runtime for ResNet-50~\cite{he2016deep} on ImageNet~\cite{deng2009imagenet} of up to $5.4\times$, up to $2.4\times$ on the larger ImageNet-22k dataset, and $2.1\times$ on CosmoFlow~\cite{mathuriya2018cosmoflow} (Sec.~\ref{sec:eval}).

Using NoPFS requires no changes to deep learning frameworks and changes to only a few lines of code in existing training scripts, as it presents an iterator-style interface to accessing data like standard data loaders (Fig.~\ref{fig:py-code}).
It can also automatically adapt to different datasets and machines, being applicable both to small-scale research clusters and large supercomputers.
Further, I/O subsystems are growing increasingly complex and differ between systems (Fig.~\ref{fig:hwinf}), a trend set to continue with future systems such as DAOS~\cite{daos} and Rabbit~\cite{rabbit}, making generic, performance model-driven systems even more attractive.

We additionally develop an I/O performance simulator to compare different I/O strategies in a variety of scenarios (Sec.~\ref{sec:perfsim}).
Beyond evaluating performance, this simulator can also be used to help design future systems for training workloads by analyzing which components (e.g., larger SSDs) have the largest impact on runtime.

When using NoPFS, I/O resources are fully utilized, alleviating the I/O bottleneck such that training is  necessarily limited by the dataset and hardware. Our key contributions are:
\begin{itemize}
\item We identify clairvoyance as a key insight for optimizing I/O and use this to perform a probabilistic analysis of access patterns and produce a near-optimal mapping of data to cache hierarchies.
\item We develop a performance model-driven distributed caching policy and implement it in NoPFS, an easy-to-use I/O middleware to optimize training I/O.
\item We significantly reduce I/O overhead, improving overall training time on ImageNet, ImageNet-22k, and CosmoFlow by up to $5.4\times$.
\end{itemize}

\begin{figure}
  \centering
  \includegraphics[width=\linewidth]{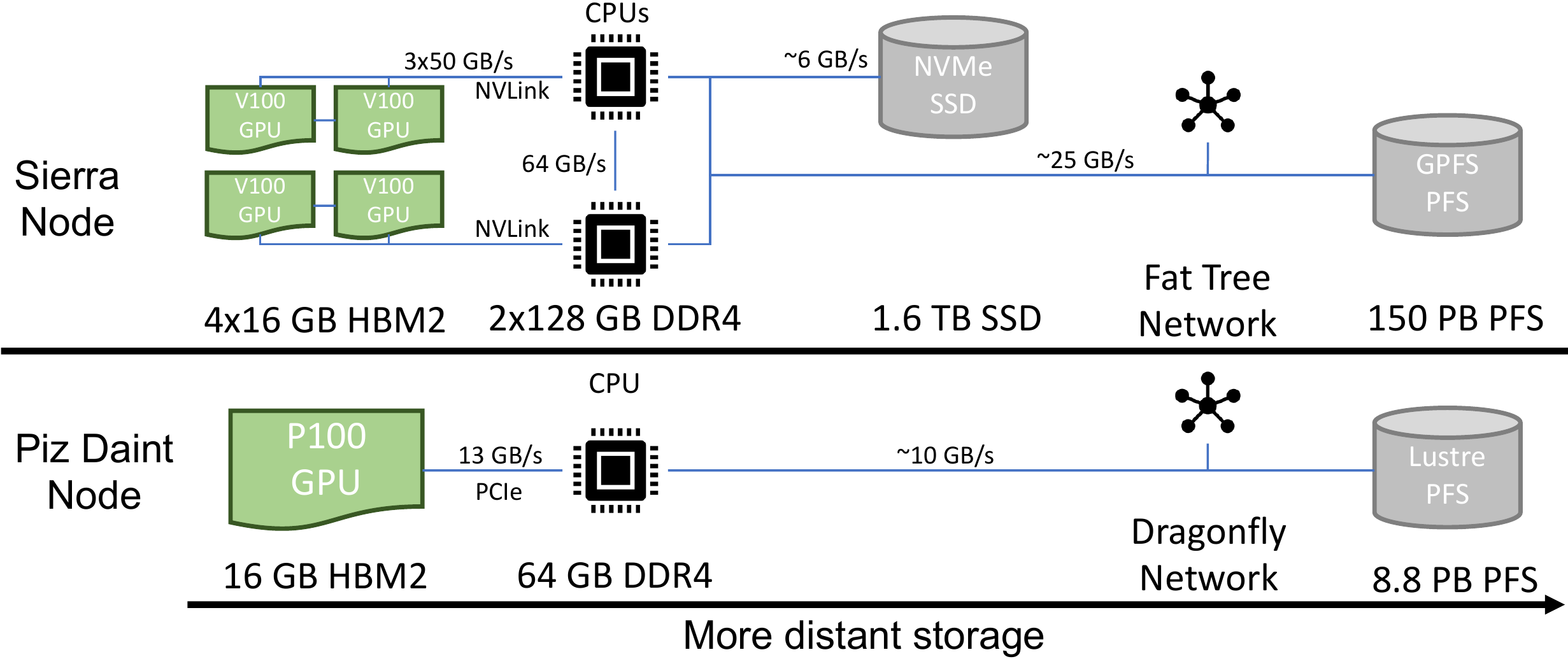}
  \caption{Storage hierarchies in modern supercomputers.}
  \label{fig:hwinf}
\end{figure}


\section{Background}
\label{sec:background}
Deep neural networks are almost always trained with variants of mini-batch SGD~\cite{bottou2018optimization}.
Training consists of many epochs; each epoch is a complete pass over the training dataset in a different, random order.
The samples that make up a given mini-batch are randomly selected without replacement from the entire training dataset.
This is typically implemented by assigning an index to each sample, randomly shuffling the indices each epoch, and then partitioning the shuffled indices into mini-batches.
Hence, a given sample is accessed exactly once in each epoch.
Given the seed used to shuffle the indices, we can exactly replicate the result of the shuffles, no matter the shuffle algorithm, and hence predict the access pattern, giving us \emph{clairvoyance}.
These access patterns hold across nearly all neural networks trained with mini-batch SGD.
For distributed training, we assume a data-parallel regime~\cite{bennun2019demystifying}, where a mini-batch is partitioned among workers.

\subsection{Data Sharding}
\label{subsec:data-sharding}

A relaxation of this regime sometimes used in practice is to assign each worker a shard (subset of the dataset) of data that fits in local storage, which may share samples with other workers (e.g.,~\citet{kurth2018exascale}).
This approach is typically adopted in order to mitigate the I/O overheads of shared storage by only using local storage.
However, this has three major limitations:
(1) The dataset must fit in the aggregate local storage. If it does not, then samples or even entire rare classes may be missed, impacting learning.
Further, on many systems, local storage is small; e.g., Piz Daint~\cite{daint} has 64 GB/node and Fugaku~\cite{fugaku} only 32 GB/node, which must also hold the model and activations when training.
(2) This can change the randomization performed by SGD, which may impact learning. It has been observed that full-dataset randomization and without-replacement sampling performs better~\cite{goyal2017accurate,gurbuzbalaban2019random}.
(3) This does not fully utilize the storage hierarchy, as it neglects distributed memory.
For example, while accessing a node-local SSD may be faster than a contended PFS, it may be faster still to access samples from a remote node's memory, as modern network bandwidth (often 10+ GB/s) is higher than SSD read bandwidth (2--10 GB/s); an ideal solution would use both.
In this work, we will focus on full-dataset randomization to avoid any issues with learning.

\subsection{Machine Learning I/O Frameworks}
\label{subsec:io-frameworks}

\begin{figure}[t]
  \centering
  \includegraphics[width=\linewidth]{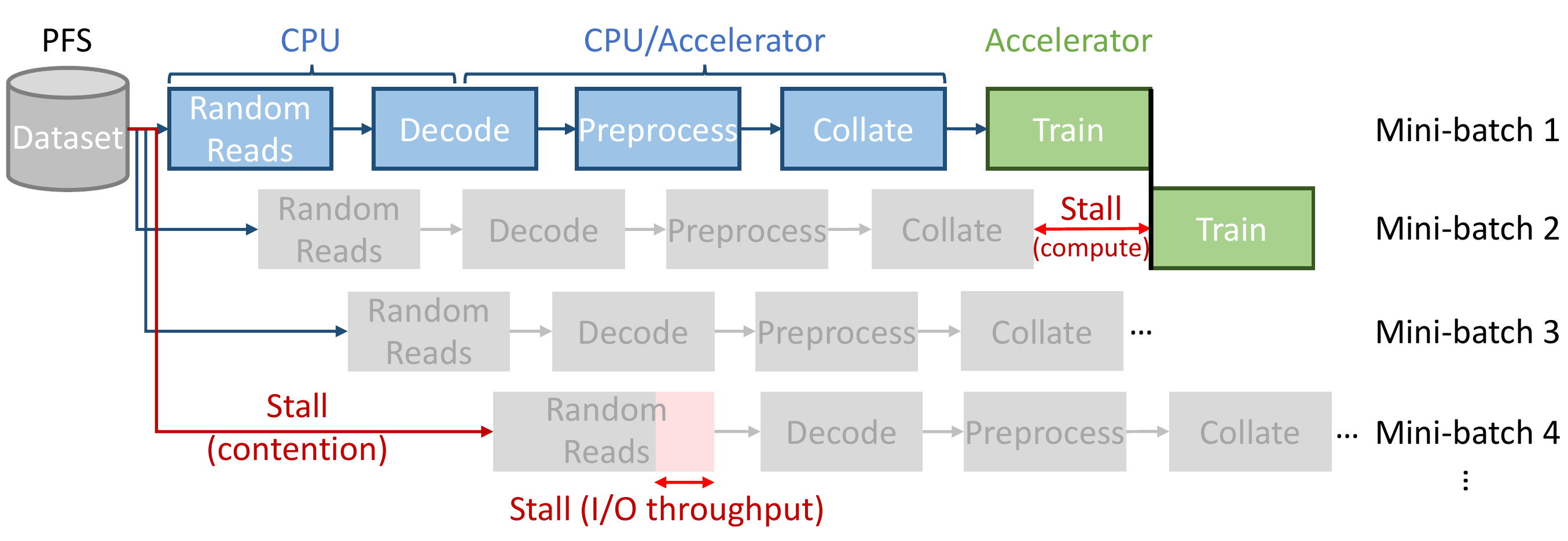}
  \caption{Overview of I/O pipelines and potential issues.}
  \label{fig:io-pipeline}
\end{figure}

I/O for training deep neural networks is a complex task and typically consists of multiple stages.
At the highest level, ``I/O'' involves reading samples from storage, preprocessing them, where preprocessing may entail multiple stages itself, such as decoding images, tokenizing text, normalization, or data augmentation, and finally collating them into a mini-batch for training (see Fig.~\ref{fig:io-pipeline}).
Stalls at any point in this process can impact training time.

There are a number of solutions for optimizing the preprocessing stage, such as memory mapping optimizations~\cite{pumma2019scalable}.
In general, these are orthogonal to optimizations for the read stage.
DALI~\cite{dali} includes both preprocessing optimizations and a file cache.
We focus primarily on optimizing this first stage, which we will refer to as ``I/O''.
In this setting, we will assume that the dataset begins in a shared storage location such as a parallel filesystem (PFS), and that training workers run on compute nodes that all have access to the storage.
This matches the MLPerf-HPC requirements~\cite{mlperf-hpc}.

We identify several key characteristics of I/O frameworks:
\begin{description}
\item[System scalability] Whether additional resources can be productively used when scaling to many nodes.
\item[Dataset scalability] Whether arbitrarily large datasets (much larger than aggregate node storage) are supported.
\item[Full randomization] Whether sample selection is randomized over the entire dataset without replacement in each epoch.
\item[Hardware independence] Whether special hardware (e.g., node-local SSDs) is used if present, but is not required.
  Storage hierarchies are complex and often differ between systems (Fig.~\ref{fig:hwinf}), making this especially important.
\item[Ease of use] Whether significant effort is needed to incorporate the framework in workflows.
\end{description}

We summarize existing approaches, along with NoPFS, according to these characteristics in Tab.~\ref{tab:io-frameworks}.
All of these approaches are capable of double-buffering, where fetching the next mini-batch is overlapped with computation, and using multiple threads to fetch and preprocess samples.
This approach is taken by PyTorch's built-in \texttt{DataLoader}~\cite{paszke2019pytorch}.
TensorFlow's \texttt{tf.data}~\cite{murray2021tf} extends this with longer-range lookahead, but typically performs data shuffling only in a limited-size buffer instead of over the entire dataset.
These two approaches have poor system scalability, as workers contend for access to shared storage.
Data sharding is widely used in practice, generally with ad hoc data staging scripts, but is necessarily limited by available system storage.
None of the existing approaches are hardware independent; either they require additional hardware, such as SSDs, or they neglect the hardware when it is available.


\section{I/O Access Patterns}
\label{sec:access-patterns}
We first review the prior work on prefetching and caching algorithms, and then analyze I/O access patterns in training.
For caches, given the access stream $R$, the optimal schedule is given by B\'{e}l\'{a}dy's clairvoyant algorithm~\cite{belady1966study}, which replaces the data that will not be needed for the longest time.
However, it is much more challenging to derive an optimal schedule for integrated prefetching and caching~\cite{cao1995study}.
There exist efficient algorithms for finding the optimal schedule in the case of a single processor and disk, and approximation algorithms for the case of a single processor with a parallel disk system~\cite{kimbrel2000near,albers2000minimizing,albers2001minimizing,jin2002simple,albers2005integrated,cao1994application}.
Unfortunately, finding an optimal schedule for the parallel disk case is NP-hard~\cite{ambuhl2003parallel}.
Similar work has been done in the context of caches for multi-core processors, where there are results for cache hierarchies, although optimal algorithms are again NP-hard~\cite{kamali2021beyond,agrawal2021tight,kamali2020multicore,bender2014cache,katti2012competitive,lopez2012paging,hassidim2010cache}.
Our case, where there are multiple processors each possibly with multiple levels of cache, is even more challenging.

Nevertheless, we can adapt ideas from these algorithms to our situation.
It can be shown that any optimal prefetching and caching strategy for a single processor and disk must follow four rules~\cite{cao1995study}:
\begin{enumerate}
\item \textbf{Optimal prefetching:} Every prefetch should fetch the next sample in $R$ that is not in the cache.
\item \textbf{Optimal replacement:} Every prefetch should discard the sample whose next use is furthest in the future.
\item \textbf{Do no harm:} Never discard sample $A$ to prefetch sample $B$ when $A$ will be used before $B$.
\item \textbf{First opportunity:} Never prefetch-and-replace when the same operation could have been done previously.
\end{enumerate}
Some of these rules can be generalized to the case of multiple disks~\cite{kimbrel2000near}, or relaxed while still producing good approximations~\cite{jin2002simple}. NoPFS is able to implement Rule 1 exactly and approximates the remaining rules within a limited time horizon, using the fact that a sample is accessed exactly once per epoch.

\subsection{Distribution of Accesses}
\label{subsec:acc-dist}

\begin{table}[t]\scriptsize
  \centering
  \begin{tabular}{lll}
    \toprule
    \textbf{Variable} & \textbf{Unit} & \textbf{Definition} \\
    \midrule
    $R$ & & Access sequence of a worker \\
    $N$ & & Number of workers \\
    $E$ & & Number of epochs \\
    $F$ & & Number of samples in dataset \\
    $c$ & MB/s & Compute throughput \\
    $\beta$ & MB/s & Preprocessing rate \\
    $b_c$ & MB/s & Inter-worker network bandwidth \\
    $t(\gamma)$ & MB/s & Random agg. read throughput (with $\gamma$ clients) of the PFS \\
    $p_j$ & & Number of threads for prefetching to storage class $j$ \\
    $d_j$ & MB & Capacity of storage class $j$ \\
    $r_j(p)$ & MB/s & Random agg. read throughput of storage class $j$ ($p$ reader threads) \\
    $w_j(p)$ & MB/s & Random agg. write throughput of storage class $j$ ($p$ writer threads) \\
    $D$ & MB & Total local storage of a worker \\
    $s_k$ & MB & Size of sample $k$ \\
    $S$ & MB & Size of dataset \\
    $B$ & & Batch size \\
    $T$ & & Number of iterations per epoch \\
    $t_{i,f}$ & & Time elapsed when worker $i$ consumes sample $R_f$ \\
    \bottomrule
  \end{tabular}
  \caption{Notation used throughout paper.}
  \label{tab:definition-summary}
  \vspace{-2em}
\end{table}

NoPFS utilizes a second key observation about the access pattern: Although each sample is read once per epoch, the number of times the same worker will read that sample over $E$ epochs of training varies depending on the random seed.
Exploiting this access frequency disparity allows us to devise better cache policies.

To formalize this, consider a fixed worker and sample, and let $X_e$ be the probability that worker will access the sample in epoch $e$.
For $N$ workers and $E$ epochs, we have that $X_e \unsim \mathrm{Bernoulli}(\frac{1}{N})$ and the access frequency $X$ of this sample is $X = \sum_{e=1}^E X_e$.
As the $X_e$ are independent Bernoulli random variables with the same success probability, we have that $X \unsim \mathrm{Binomial}(E, \frac{1}{N})$.
Thus the mean of the distribution is $\mu = \mathbb{E}[X] = \frac{E}{N}$ and the probability that the access frequency is greater than $\mu$ by a factor $\delta$ (and hence the sample will be accessed more often by the worker) is
\[ P(X > (1 + \delta) \mu) = \sum_{k=\lceil (1+\delta) \mu \rceil}^K \binom{E}{k} {\left( \frac{1}{N} \right)}^k {\left( \frac{N-1}{N} \right)}^{K-k}. \]
However, we are primarily interested in the number of samples that will be accessed more often by a worker, which is the sum over all samples of $\mathbbm{1}_{X>(1+\delta)\mu}$.
Then, using that expectation is linear, the expected value is given by $F \cdot P(X > (1 + \delta) \mu)$, where $F$ is the size of the dataset.
We verified that this works well using Monte Carlo simulations.
As an example, consider a standard ImageNet-1k training run with $N = 16$, $E = 90$, $F = 1{,}281{,}167$, and $\delta = 0.8$.
Our estimate gives an expected value of $\unsim 31{,}635$: although each sample is read $6$ times on average by a worker, around $31{,}635$ samples will be accessed more than $10$ times.
The distribution from a Monte Carlo simulation is shown in Fig.~\ref{fig:acc-mc}.
The number of samples accessed more than $10$ times is $31{,}863$, closely agreeing with the calculations.

\begin{figure}
  \centering
  \includegraphics[width=0.9\linewidth]{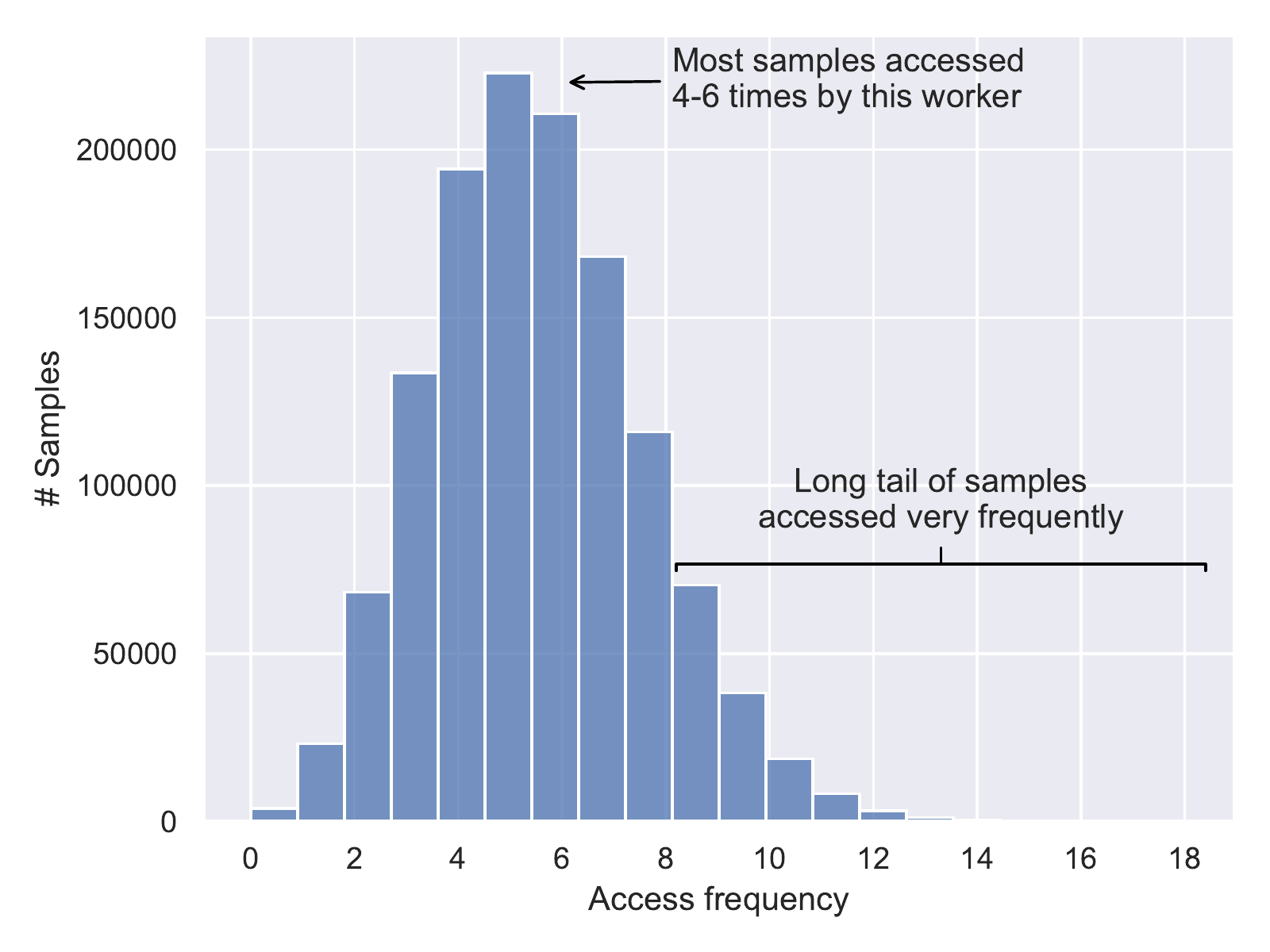}
  \caption{Simulation of access frequency for a single process (of 16) when training for 90 epochs on ImageNet-1k.}\vspace{-2em}
  \label{fig:acc-mc}
\end{figure}

As each sample is accessed exactly $E$ times by fully-randomized SGD without replacement, if some worker access a sample more (or less) frequently, then some other worker must access it less (or more) frequently.
We formalize this as follows:
\begin{lemma}
  \label{lemma:access-bounds}
  If a worker accesses a sample $\lceil (1 + \delta) \frac{E}{N} \rceil$ times (resp. $\lfloor (1 - \delta) \frac{E}{N} \rfloor$ times), at least one other worker will access the sample at most $\lceil (\frac{N - 1 - \delta}{N - 1}) \frac{E}{N} \rceil$ (resp.\ at least $\lfloor (\frac{N - 1 + \delta}{N - 1}) \frac{E}{N} \rfloor$) times.
\end{lemma}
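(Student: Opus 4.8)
The plan is to reduce the statement to an elementary conservation-plus-averaging argument. The starting point is the fact, already established in the text, that under fully-randomized without-replacement SGD each sample is accessed exactly once per epoch, and that each mini-batch is partitioned (disjointly) among the $N$ workers. Consequently, for any fixed sample the per-worker access counts sum to exactly $E$ over the course of training; this conservation identity is the engine of the whole proof, and I would state it explicitly at the outset, since it is the one premise on which everything else rests.

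First I would fix the sample and let $a_1,\dots,a_N$ denote the per-worker access counts, so that $\sum_{i=1}^{N} a_i = E$. Since the workers are a priori symmetric, I may relabel so that the high-frequency worker is worker $1$ and take $a_1 = \lceil (1+\delta)\frac{E}{N}\rceil$. The remaining $N-1$ workers then share $E - a_1$ accesses, so by averaging (the minimum of $N-1$ numbers is at most their mean) at least one of them, say worker $j$, satisfies $a_j \le (E-a_1)/(N-1)$.

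Next I would bound this average. Using $a_1 = \lceil (1+\delta)\tfrac{E}{N}\rceil \ge (1+\delta)\tfrac{E}{N}$ gives
\[ \frac{E - a_1}{N-1} \;\le\; \frac{E - (1+\delta)\frac{E}{N}}{N-1} \;=\; \left(\frac{N-1-\delta}{N-1}\right)\frac{E}{N}, \]
so worker $j$'s count is at most $\left(\frac{N-1-\delta}{N-1}\right)\frac{E}{N}$; since $a_j$ is an integer it is in fact bounded by the floor of this quantity, and \emph{a fortiori} by the ceiling appearing in the statement. The lower-frequency case is entirely symmetric: taking $a_1 = \lfloor(1-\delta)\frac{E}{N}\rfloor \le (1-\delta)\frac{E}{N}$, the \emph{maximum} of the remaining counts is at least their mean $(E-a_1)/(N-1) \ge \left(\frac{N-1+\delta}{N-1}\right)\frac{E}{N}$, and integrality again yields the stated floor bound.

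I do not expect a genuine obstacle here, as the argument is combinatorial rather than analytic. The only place demanding care is the bookkeeping of the floor and ceiling functions: one must check that integrality of the access counts pushes the bound in the direction claimed. In fact the natural rounding gives the slightly stronger floor (resp.\ ceiling), so the weaker ceiling (resp.\ floor) stated in the lemma holds automatically, and I would remark on this to reassure the reader that the rounding is not an error but a deliberate (symmetric) weakening.
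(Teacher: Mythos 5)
Your proof is correct and takes essentially the same route as the paper: both arguments rest on the conservation identity $\sum_{i=1}^{N} a_i = E$ plus pigeonhole, and your direct ``minimum is at most the mean'' step is just the contrapositive of the paper's proof by contradiction (which assumes every other worker exceeds the bound and derives a total of at least $E + N - 1 > E$). Your integrality remark even gives the marginally sharper floor bound in the upper-bound case, of which the paper's stated ceiling is a deliberate weakening.
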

\begin{proof}
  Assume towards a contradiction that every other worker accesses the sample $\lceil (\frac{N - 1 - \delta}{N - 1})\frac{E}{N} \rceil + 1$ times for some $E$, $\delta \in [0, N - 1]$, and $N > 1$.
  Then the total accesses to this sample are
  \begin{align*}
    \left\lceil (1 + \delta) + \frac{E}{N} \right\rceil + (n - 1) \left( \left\lceil \left( \frac{N - 1 - \delta}{N - 1} \right) \frac{E}{N} \right\rceil + 1 \right) &\geq \\
    (1 + \delta) \frac{E}{N} + (N - 1) \left( \left( \frac{N - 1 - \delta}{N - 1} \right)  \frac{E}{N} + 1 \right) &= \\
    (1 + \delta) \frac{E}{N} + E - (1 + \delta) \frac{E}{N} + N - 1 &= \\
    E + N - 1 &> E.
  \end{align*}
  This contradicts that every sample is accessed exactly $E$ times during training. The proof of the other bound is symmetric.
\end{proof}


\section{Performance Modeling}
\label{sec:perfmodel}
The I/O access frequency distribution allows us to identify frequently used samples to cache on a worker.
We now turn to our performance model of training I/O, which will allow us to decide where to cache samples and where to fetch them from to minimize training time.
These two analyses combined form the basis for NoPFS (Sec.~\ref{sec:nopfs}).
It also enables us to develop a simulator to compare I/O frameworks, identify bottlenecks, and help design future systems for training workloads (Sec.~\ref{sec:perfsim}).

First we introduce some notation to define the compute environment; all associated quantities can be measured with simple benchmarks such as training microbenchmarks, STREAM~\cite{mccalpin1995memory}, FIO~\cite{fio}, and IOR~\cite{ior}.
To simplify presentation, we will assume there is one worker per compute node (this is not necessary in practice).

Let there be $N$ workers, where each worker $i$ has:
\begin{itemize}
\item $c$ [MB/s]: Compute throughput for training.
  This depends on the details of the neural network, hardware, and software.
  We model $c$ as MB/s as it directly relates to I/O subsystem parameters; if it is known only in terms of samples/second, it can be approximated by multiplying this by the average file size.
  If samples are resized during preprocessing, the original size should be used.
\item $\beta$ [MB/s]: Data preprocessing rate.
\item We will assume there is no network congestion.
\item $b_c$ [MB/s]: Inter-worker network bandwidth.
\item $t(\gamma)$ [MB/s]: Random aggregate read throughput of the PFS, as a function of the number of readers $\gamma$. This depends on $\gamma$ as PFS bandwidth is heavily dependent on the number of clients~\cite{chowdhury2019characterization}.
\end{itemize}

To account for the storage diversity present in current and upcoming systems, we will assume there are $J$ distinct storage classes which group similar storage media.
E.g., a storage class can represent RAM, SSDs, HDDs, shared global burst buffers, or emerging NVRAM technologies.
Storage class $0$ is defined to be the staging buffer, a (usually small) in-memory buffer that is shared with the machine learning framework.
Storage class $j$ is characterized by:
\begin{itemize}
\item $d_j$ [MB]: Capacity of storage class $j$. The total local storage of a worker is therefore $D = \sum_{j=1}^{J} d_j$.
\item $r_j(p)$ and $w_j(p)$ [MB/s]: Random aggregate read and write throughput for storage class $j$ with $p$ threads.
\item $p_j$: Number of threads used for prefetching data to storage class $j$.
  We assume there is always at least one thread for prefetching to the staging buffer, i.e., $p_0 \geq 1$.
\end{itemize}
We model throughput as a function of $p$ as for many storage devices, a single thread cannot saturate its bandwidth.

Let our training dataset consist of $F$ samples, where sample $k$ has size $s_k$.
Each sample may have a different size.
The size of the whole dataset is $S = \sum_{k=1}^{F} s_k$.
We can have that $S > D$, where it is not possible for the dataset to be stored on a single worker, or $S > ND$, where the dataset cannot be stored across all workers.
The mini-batch size is $B$ and there are $E$ epochs.
One epoch consists of $T = \lfloor \frac{F}{B} \rfloor$ iterations (or $\lceil \frac{F}{B} \rceil$ if we keep the last, small iteration).

At iteration $h$, $1 \leq h \leq ET$, we process a batch $B^h \subseteq \{1, \ldots, F\}$ and worker $i$ processes its local batch $B^{h,i} \subseteq B^h$.
We write $b_i = |B^{h,i}|$.
As each sample is read exactly once in an epoch, the sets $B^k$ for $rT \leq k \leq (r+1)T$, for some $r \in \mathbb{N}$, are pairwise disjoint, which implies the same for the $B^{k,i}$.
For data-parallelism, we have that $B^{h,1}, \ldots, B^{h,N}$ partition $B^h$. (Adapting this to other training regimes, e.g., model-parallelism, is straightforward.)

Lastly, we write $B^{h,i}_\ell$ to be the $\ell$th sample in worker $i$'s $h$th batch.
Then the worker's access stream is $R = (B^{1,i}_1, B^{1,i}_2, \ldots, B^{1,i}_{b_i}, B^{2,i}_1, \ldots)$.

\begin{figure}
  \centering
  \includegraphics[width=0.9\linewidth]{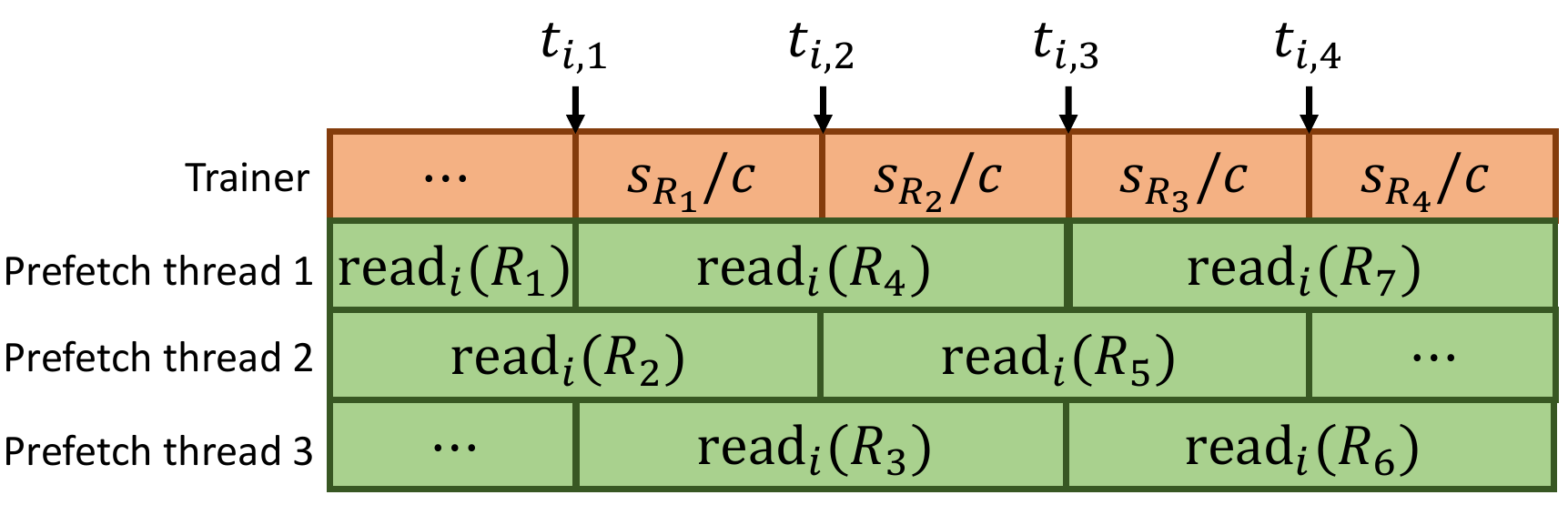}
  \caption{Performance model access times.}\vspace{-1.5em}
  \label{fig:perf-access}
\end{figure}

We now define the key metric of our model, $t_{i,f}$, the time elapsed when worker $i$ consumes $R_f$, the $f$th entry of $R$:
\[ t_{i,f} = \max \left( \mathrm{avail}_i(f), t_{i,f-1} + \frac{s_{R_{f-1}}}{c} \right), \]
where $\mathrm{avail}_i(f)$ is the time $R_f$ is available in the staging buffer.
This is illustrated in Fig.~\ref{fig:perf-access}.
Assuming threads are load balanced, we have
\[ \mathrm{avail}_i(f) = \frac{\sum_{k=1}^f \mathrm{read}_i(R_k)}{p_0}. \]
We define $\mathrm{read}_i(k) = \mathrm{fetch}_i(k) + \mathrm{write}_i(k)$ as the time to read the $k$th sample into the staging buffer.
Here, $\mathrm{fetch}_i(k)$ is the time to fetch the sample into memory and $\mathrm{write}_i(k)$ the time to preprocess and store it in the staging buffer.
$\mathrm{write}_i(k)$ does not depend on the data source, and is
\[ \mathrm{write}_i(k) = \max \left(  \frac{s_k}{\beta}, \frac{s_k}{w_0(p_0)/p_0} \right), \]
where we assume preprocessing and writing can be pipelined in parallel.
For fetching data, there are three cases, and we use the fastest applicable one:
\begin{enumerate}
\item Reading from the PFS, while $\gamma - 1$ other workers do as well: $\mathrm{fetch}_{i,0,0}(k) = s_k / (t(\gamma)/\gamma)$.
\item Reading from another worker's storage class $j$: $\mathrm{fetch}_{i,1,j}(k) = s_k / \min(b_c, r_j(p_j) / p_j)$.
\item Reading from its local storage class $j$ (assuming the sample is present): $\mathrm{fetch}_{i,2,j}(k) = s_k / (r_j(p_j) / p_j)$.
\end{enumerate}

This performance model drives runtime selection of data fetching and caching in NoPFS.
Note, in practice, because of remote data fetching, we may not know the exact number of threads accessing a local storage class.
However, this generally does not change the rank ordering due to disparities in speed of access.


\section{N{\lowercase{o}}PFS}
\label{sec:nopfs}
We now present the design and implementation of NoPFS, which combines the aforementioned performance model with our analysis of access patterns to provide distributed caching and prefetching.

\subsection{Design}
\label{subsec:nopfs-design}

\begin{figure}[t]
  \centering
  \includegraphics[width=0.9\linewidth]{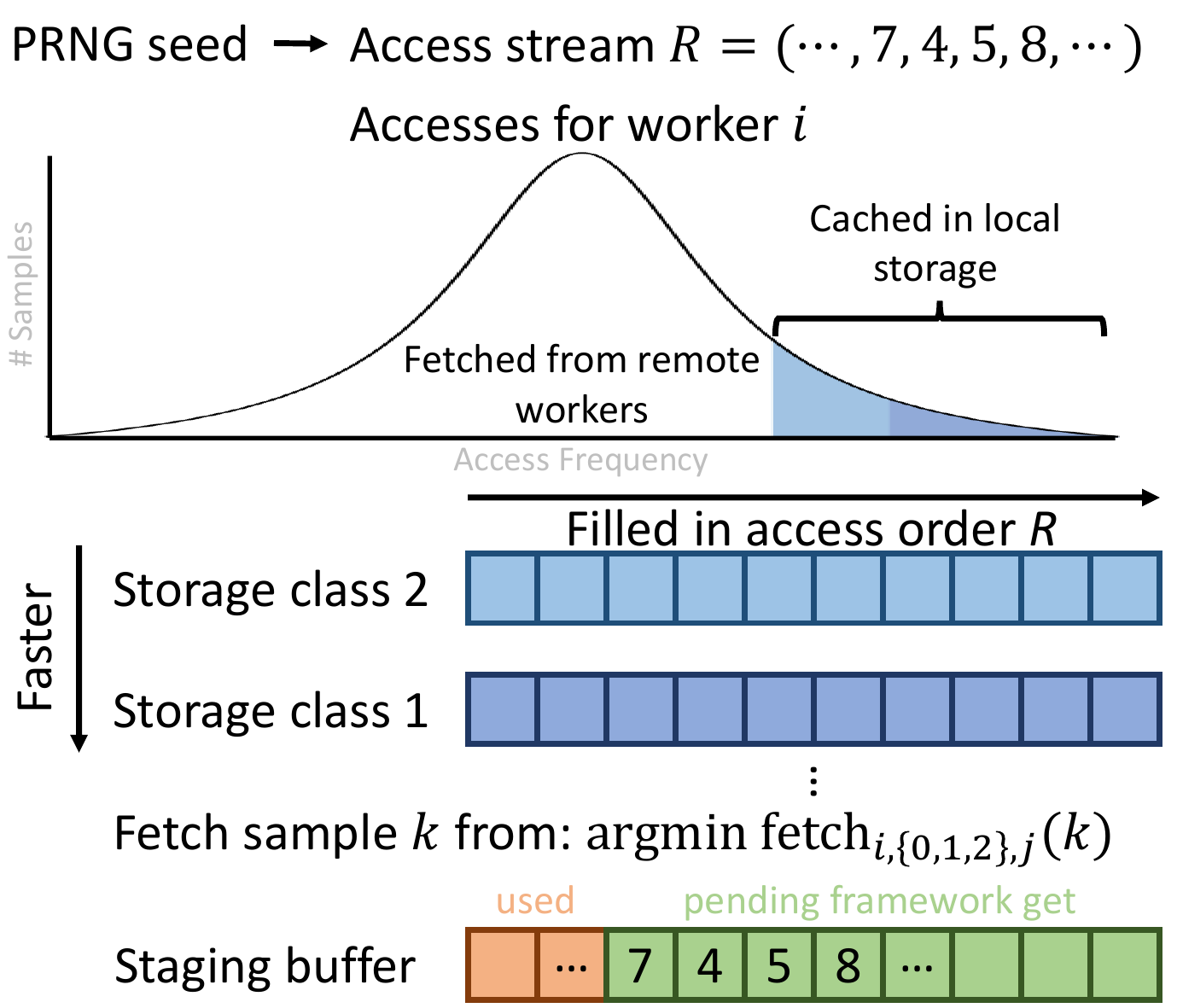}
  \caption{Overview of NoPFS's prefetching/caching policy.}
  \label{fig:nopfs-design}
\end{figure}

NoPFS needs to answer several questions to implement its prefetching and caching policy:
(1) Which samples should be fetched to the staging buffer when?
(2) Where should these samples be fetched from?
(3) Which samples should be assigned to which storage class, and what order should they be prefetched in?
We will discuss each of these in turn; because NoPFS uses clairvoyance and a performance model, the solutions are near-optimal.
The overall policy is summarized in Fig.~\ref{fig:nopfs-design}.

As we know the PRNG seed, we can exactly compute $R$, and with this prefetch data in the correct access order into the staging buffer (satisfying Rule 1).
Once a sample is read, a worker will access it again at the earliest in the next epoch, and every sample that follows in the current epoch is necessarily accessed earlier.
Therefore, we can approximate Rules 2--4 by immediately dropping samples from the staging buffer after access, freeing up space for samples that (with high probability) will be accessed sooner.

While this determines which samples to fetch to the staging buffer at what time, we need to use our performance model to decide from where to fetch samples.
Because we know $R$ for each worker, every worker knows where every sample is cached, and we can select the location to fetch from that requires minimal time.

Finally, we define the strategy used by other storage classes.
Suppose the worst case where a worker always waits before consuming a sample.
Then the total training time is
\[ t_{i,|R|} = \mathrm{avail}_i(|R|) = \frac{\sum_{k=1}^{|R|}(\mathrm{fetch}_i(R_k) + \mathrm{write}_i(R_k))}{p_0}. \]
We fill the other storage classes to minimize this.
If we ignore fixed terms in the strategy, we need to compute $\min \sum_{k=1}^{|R|} \mathrm{fetch}_i(R_k)$.
As we can select the fastest location to fetch from, this becomes
\[ \sum_{k=1}^{|R|} \frac{s_{R_k}}{\max(t(\gamma)/\gamma, \min(b_c, r_{j_r}(p_{j_r})/p_{j_r}), r_{j_\ell}(p_{j_\ell})/p_{j_\ell})}, \]
where $j_r$ and $j_\ell$ are the fastest remote and local storage class of sample $R_k$, respectively.
If a file is not available locally or at any remote worker, we define the respective throughput to be 0.
Letting $r_k$ be the access frequency of sample $k$ and assuming a static system (i.e., samples are already loaded in storage classes and no parameters change), this becomes a sum over all samples:
\[ \sum_{k=1}^F \frac{r_k s_k}{\max(t(\gamma)/\gamma, \min(b_c, r_{j_r}(p_{j_r})/p_{j_r}), r_{j_\ell}(p_{j_\ell})/p_{j_\ell})}. \]
Assuming that samples are similarly sized, we can conclude:
\begin{enumerate}
\item When $r_k$ is large (i.e., a worker accesses a sample frequently), we want $r_{j_\ell}(p_{j_\ell})$ to be large, and therefore should cache the sample in a fast local storage class.
\item As $t(\gamma)/\gamma$ is often constant or decreasing with many readers, we want to minimize $\gamma$ to reduce PFS contention.
  We also want $r_{j_r}(p_{j_r})$ to be large for samples where $r_{j_\ell}(p_{j_\ell})$ is small (i.e., samples not cached locally, or in slow storage).
  These imply samples should be well-distributed among workers.
\end{enumerate}

Recall that the access frequency $r_k$ varies for different $k$ and Lemma~\ref{lemma:access-bounds} implies that when $r_k$ is small on one worker, it will be large on at least one other worker (and vice versa).
We thus use $r_k$ to make the fetching decision: A worker fetches samples with the largest $r_k$ to its fastest storage class, and so on for slower classes until either it has cached the entire dataset or filled its local storage.

The last step is to define the fetch order.
Our analysis has thus far assumed all storage classes have already been filled, but this would require a potentially costly prestaging step that cannot be overlapped with training.
We follow Rule 1 and use $R$ to ensure we always fetch the samples in order of access.

\subsection{Implementation}
\label{subsec:nopfs-impl}

\begin{figure}[t]
  \centering
  \includegraphics[width=0.8\linewidth]{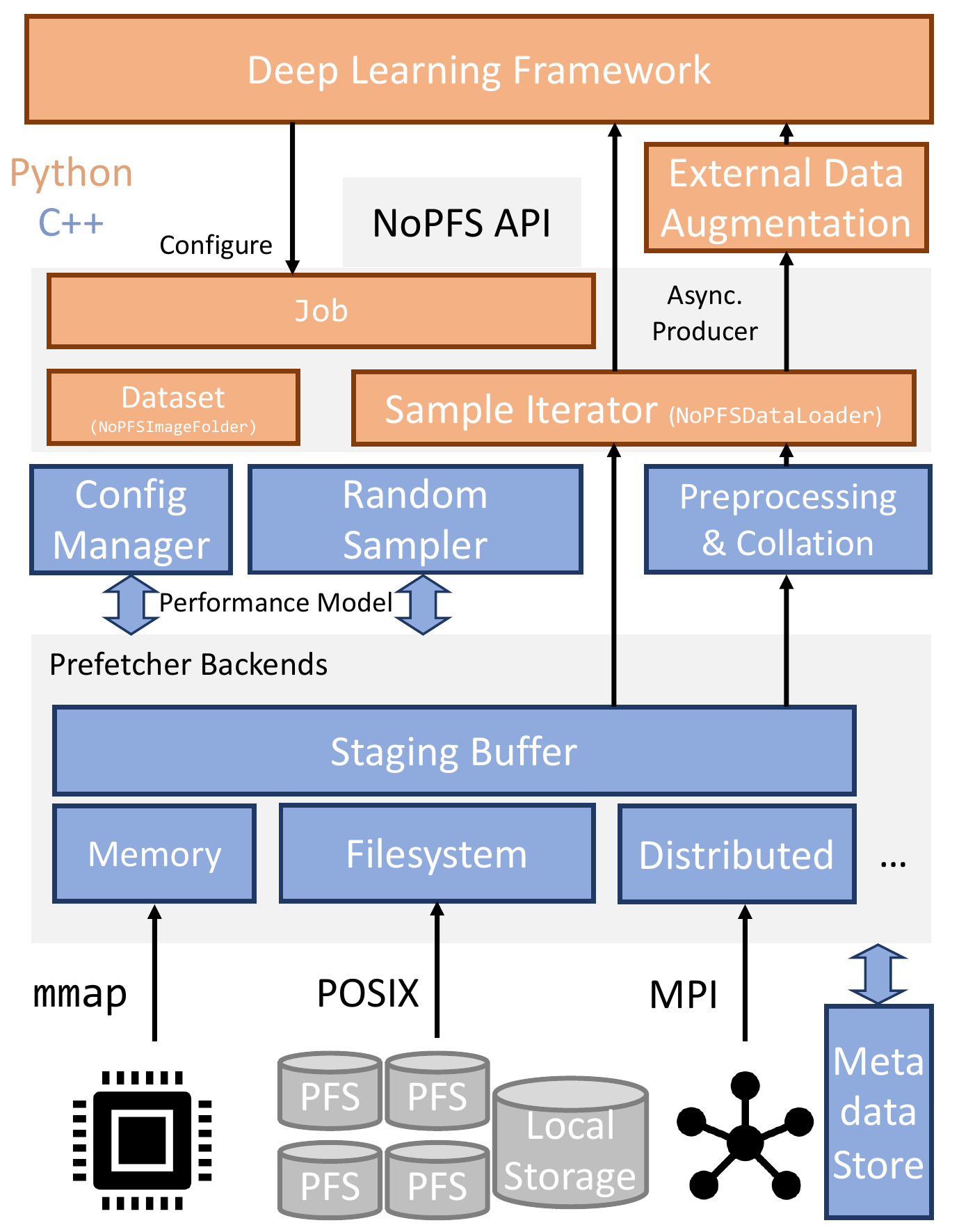}
  \caption{Overview of the NoPFS implementation.}
  \label{fig:nopfs-impl}
\end{figure}

We now briefly describe the implementation of this design, summarized in Fig.~\ref{fig:nopfs-impl}.
NoPFS consists of a core backend implemented in C++ and a generic Python interface that exposes the functionality for integration with existing deep learning frameworks.

\subsubsection{Python Interface}
\label{subsubsec:nopfs-py}

The Python interface provides the \texttt{Job} class, which represents the execution of a machine learning job on a particular dataset.
A single Python process can run multiple jobs at the same time (e.g., training and validation).
This only requires the user to specify a few parameters, such as the dataset, batch size, and the number of epochs.
The random seed that generates the access sequence can either be specified manually by the caller or generated by the library.

Once initialized, the \texttt{Job} exposes two key features: \texttt{buffer\_p}, a pointer to NoPFS's staging buffer, allowing zero-copy access to samples; and a \texttt{get} method, which returns samples and their labels, enabling iterator-style access to data.

It is easy to incorporate this into existing training pipelines.
We provide convenience wrappers to replace the data loader and commonly used datasets in PyTorch.
Using these, minimal changes are required to integrate NoPFS with existing PyTorch codebases, as we demonstrate in Fig.~\ref{fig:py-code}.

\begin{figure}[h]
  \centering
  \scriptsize{PyTorch data loading pipeline:}
  \begin{Python}[emph={data_dir,data_transforms,dataset,num_replicas,n,rank,batch_size,sampler,dsampler,dataloader},
    emph={[2]ImageFolder,DistributedSampler,DataLoader},
    caption={}]
dataset = ImageFolder(data_dir, data_transforms)
dsampler = DistributedSampler(dataset, num_replicas=n, rank=rank)
dataloader = DataLoader(dataset, batch_size, sampler=dsampler)
\end{Python}
  \scriptsize{NoPFS data loading pipeline:}
  \begin{Python}[emph={job,data_dir,batch_size,num_epochs,drop_last,dataset,data_transforms,dataloader},
    emph={[2]Job,NoPFSImageFolder,NoPFSDataLoader},
    caption={}]
job = Job(data_dir, batch_size, num_epochs, 'uniform', drop_last)
dataset = NoPFSImageFolder(data_dir, job, data_transforms)
dataloader = NoPFSDataLoader(dataset)
  \end{Python}
  \caption{PyTorch versus NoPFS data loading.}
  \label{fig:py-code}
\end{figure}

\subsubsection{C++ Core}
\label{subsubsec:nopfs-core}

The core of NoPFS comprises a central manager, generic backends for storage and prefetching, and utilities.
For simplicity, the parameters for our performance model are specified by a system-wide configuration file, with parameterized values (e.g., PFS bandwidth for a given number of readers) inferred using linear regression when the exact value is not available.
This could be generalized to automatically determine performance parameters.

Storage backends need only implement a generic interface, and NoPFS currently supports filesystem- and memory-based storage backends, which are sufficient to support most storage classes (including RAM, SSDs, and HDDs).
Additional backends (e.g., for key-value stores or databases) can easily be added.

For tracking samples, a metadata store keeps a catalog of locally cached samples.
A distributed manager class handles all distributed operations among workers, using MPI for the underlying communication infrastructure.
During setup, it is responsible for distributing a worker's access sequence $R$ to all other workers (an \texttt{allgather}).

It also provides functionality for serving locally cached samples to and requesting samples from remote nodes.
While it is always possible for a worker to know that a sample is not cached by any other worker, it is not possible (without additional metadata traffic) for a worker to know whether a worker that will cache a sample has successfully prefetched it.
As requesting a remote sample that has not yet been cached results in wasted communication and increased stall time, we use a heuristic to estimate when a sample has been cached.
Assuming samples are of comparable size and prefetching is load balanced, if the local prefetching has reached the corresponding access stream location, then the remote worker likely has, too.
Note that the failure of this heuristic is not an error, as NoPFS detects such cases, but we wish to minimize such occurrences due to their performance impact.
We confirmed that, in practice, there are very few false positives.

The core prefetching logic is managed by prefetcher backends, which implement all the logic for prefetching to a particular storage class.
Adding a new prefetcher again only requires implementing a simple, generic interface.
NoPFS provides a memory-based prefetcher and a filesystem-based prefetcher (which uses \texttt{mmap} to access files).
We also implement a special prefetcher for the staging buffer, which is filled in a circular manner.
This prefetcher coordinates with the Python interface via a producer/consumer queue to ensure that the consumer knows when samples are available, and that the prefetcher knows when samples have been consumed (and therefore can be replaced).
If a prefetcher for a local storage class finds that a sample that should be present has not yet been fetched, that prefetcher will retrieve and cache the sample itself, helping to smooth out load imbalance.

Finally, the configuration, storage classes, and prefetchers are managed by a prefetcher manager class, which coordinates the different components.
We also provide convenience utilities, including an optimized, OpenCV-based~\cite{opencv_library} image preprocessing pipeline, and batch collation directly into a pinned memory buffer, which we observed could be a bottleneck otherwise.


\section{Performance Simulator}
\label{sec:perfsim}
\begin{figure*}[ht]
  \centering
  \begin{subfigure}[b]{0.3\linewidth}
    \centering
    \includegraphics[width=\linewidth]{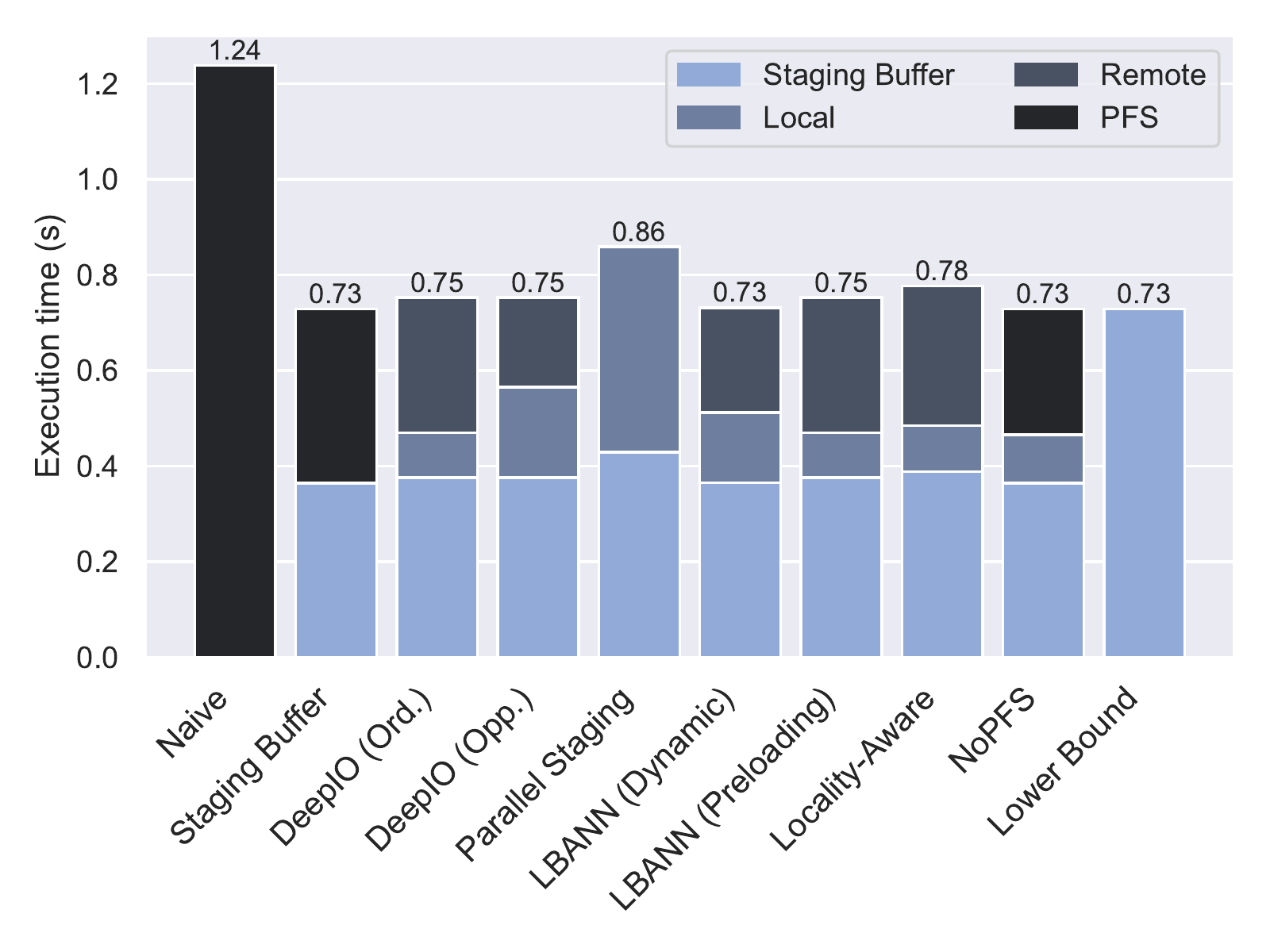}
    \caption{$S < d_1$, MNIST}
    \label{fig:perfsim-mnist}
  \end{subfigure}
  \begin{subfigure}[b]{0.3\linewidth}
    \centering
    \includegraphics[width=\linewidth]{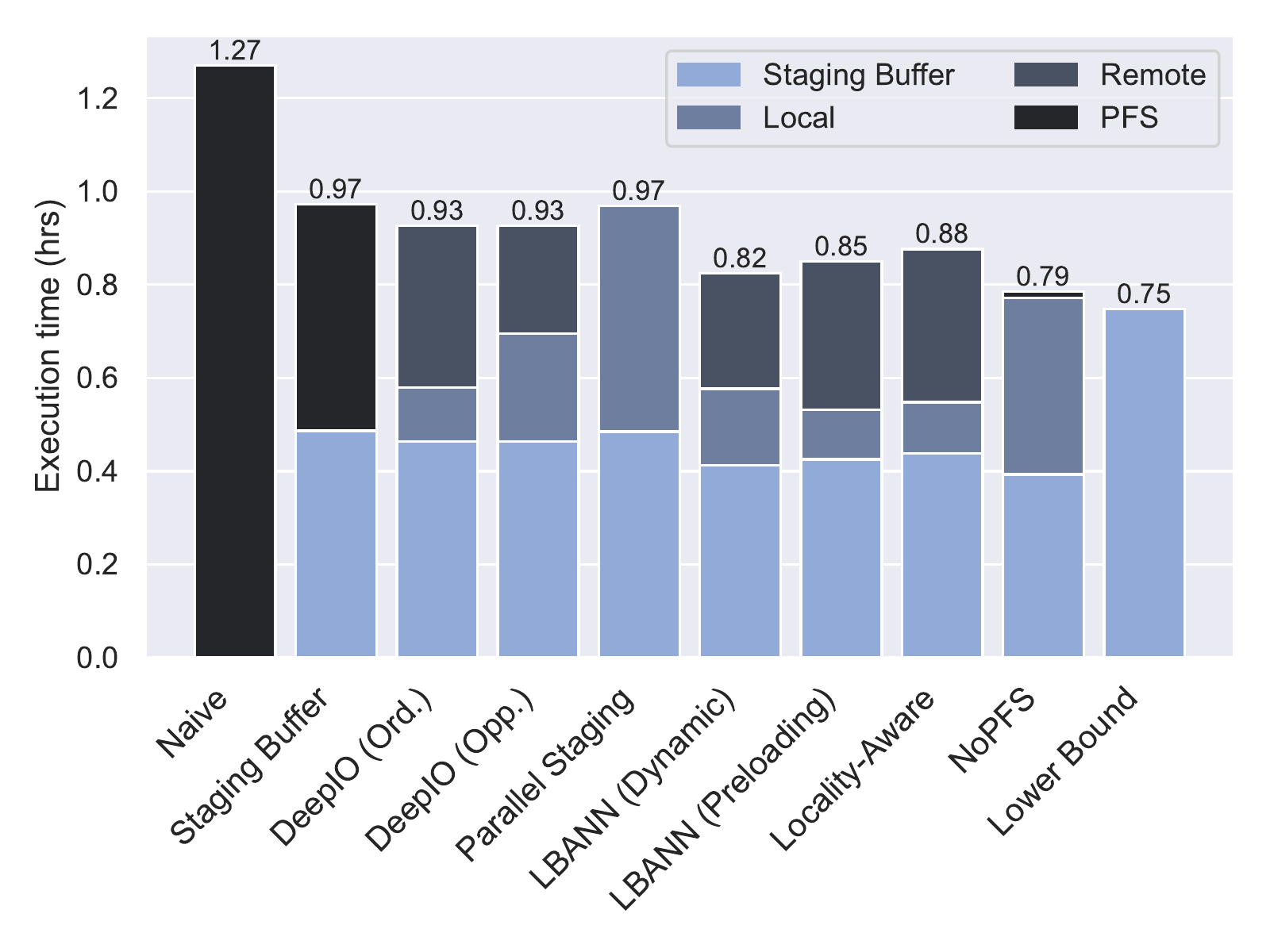}
    \caption{$d_1 < S < D$, ImageNet-1k}
    \label{fig:perfsim-imagenet1k}
  \end{subfigure}
  \begin{subfigure}[b]{0.3\linewidth}
    \centering
    \includegraphics[width=\linewidth]{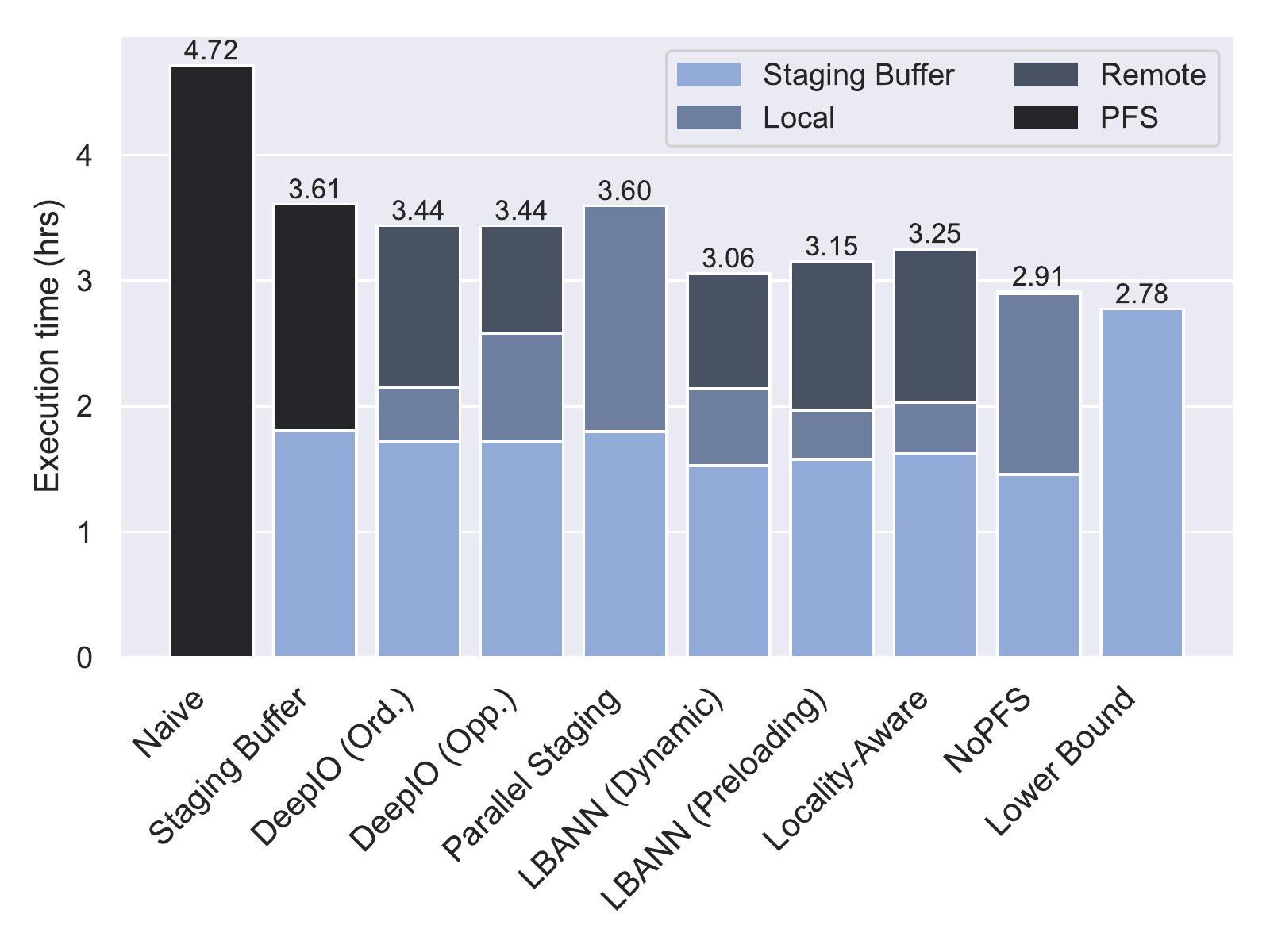}
    \caption{$d_1 < S < ND$, OpenImages}
    \label{fig:perfsim-openimages}
  \end{subfigure}
  \begin{subfigure}[b]{0.3\linewidth}
    \centering
    \includegraphics[width=\linewidth]{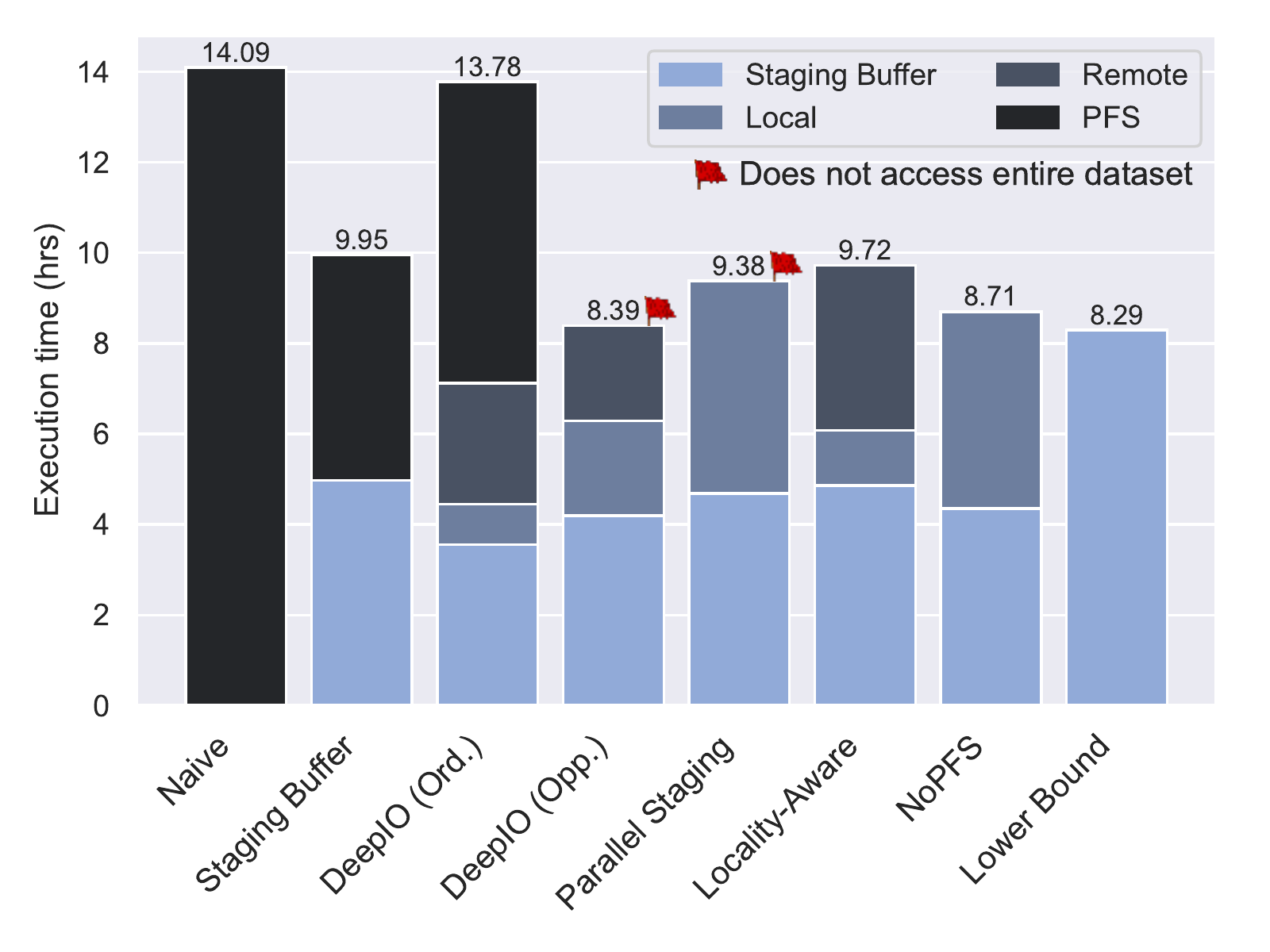}
    \caption{$D < S < ND$, ImageNet-22k}
    \label{fig:perfsim-imagenet22k}
  \end{subfigure}
  \begin{subfigure}[b]{0.3\linewidth}
    \centering
    \includegraphics[width=\linewidth]{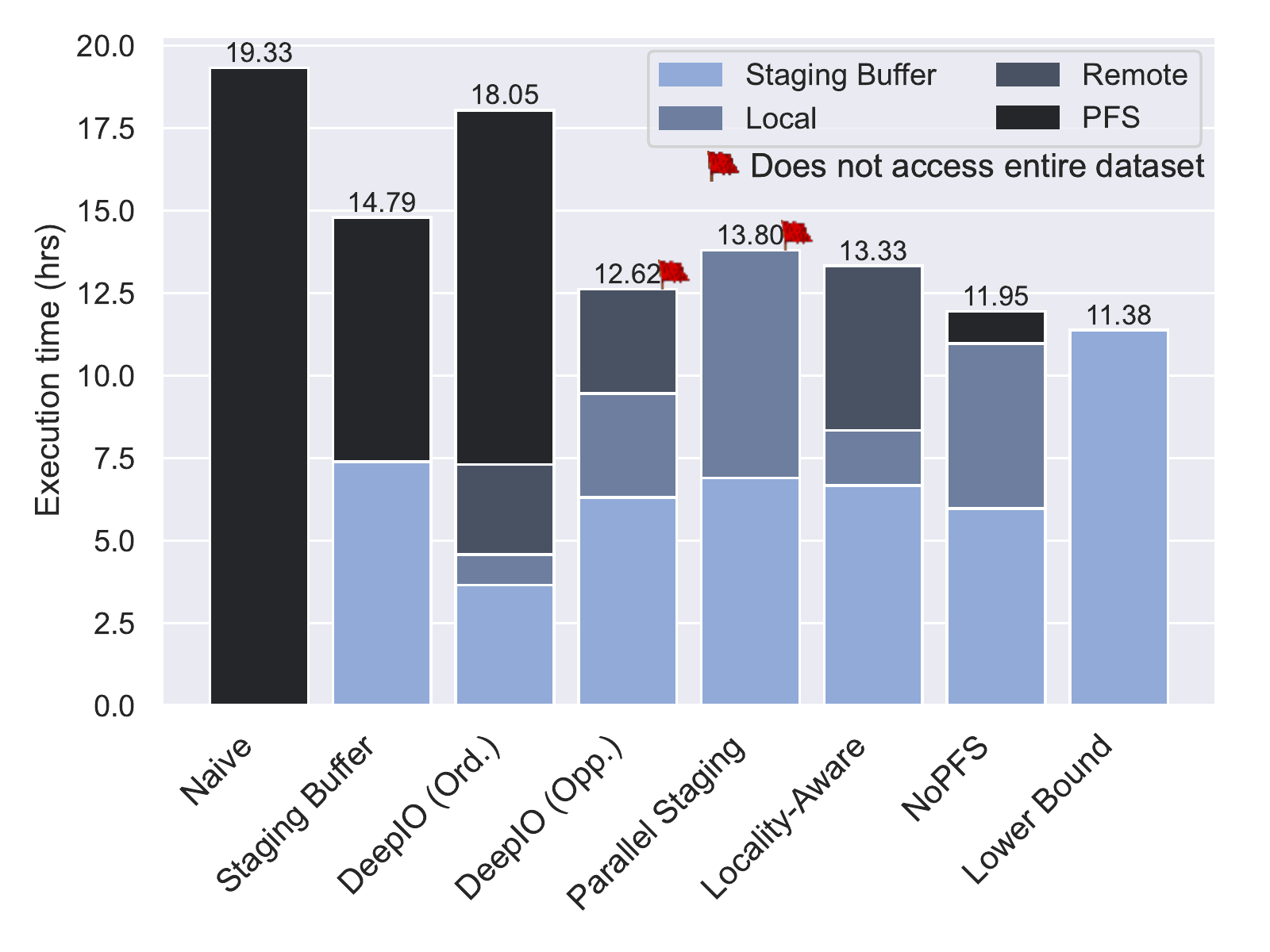}
    \caption{$ND < S$, CosmoFlow}
    \label{fig:perfsim-cosmoflow}
  \end{subfigure}
  \begin{subfigure}[b]{0.3\linewidth}
    \centering
    \includegraphics[width=\linewidth]{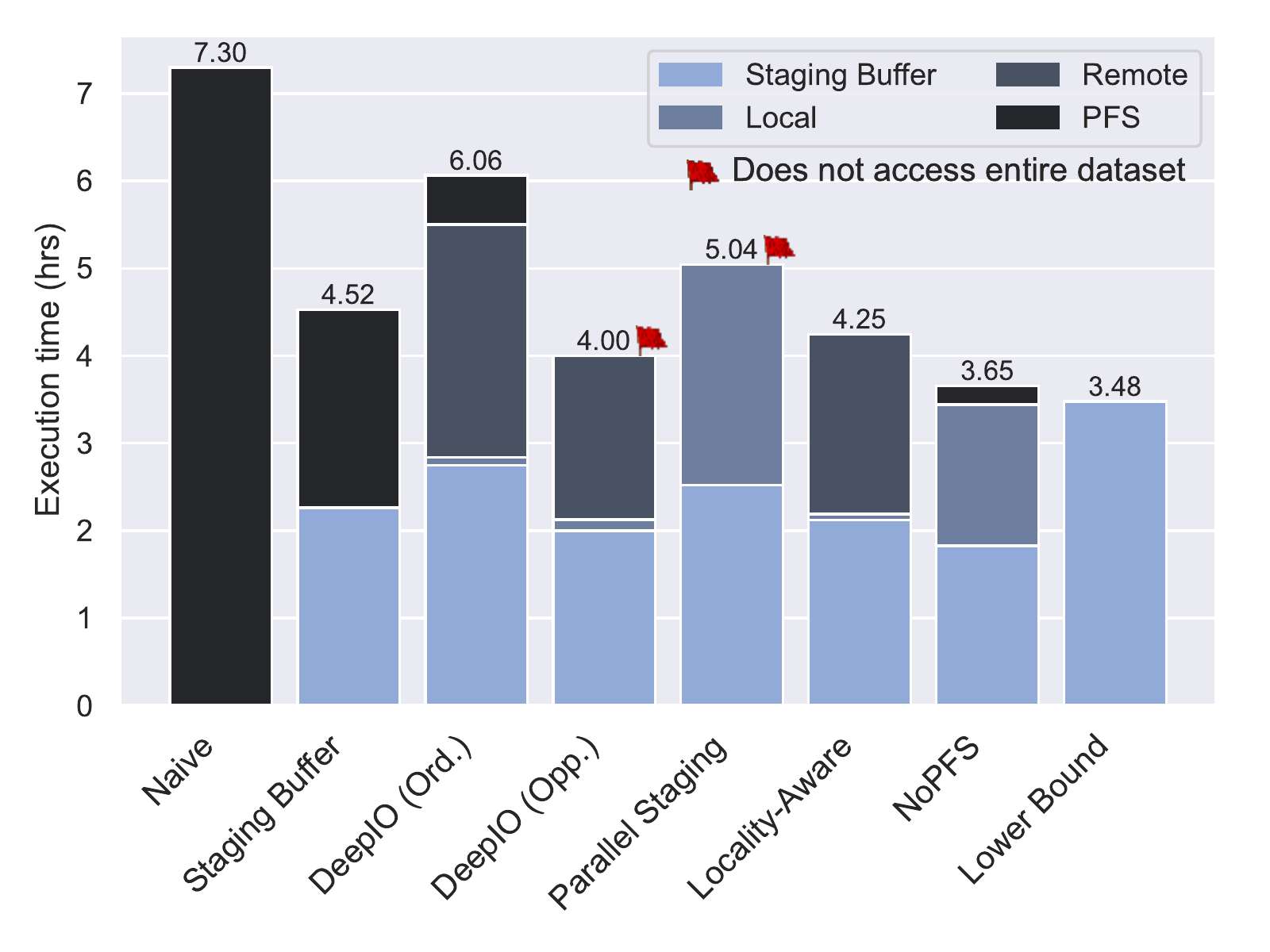}
    \caption{$ND < S$, $N = 8$, CosmoFlow $512^3$}
    \label{fig:perfsim-cosmoflow512}
  \end{subfigure}
  \caption{Performance simulation results. Stacked bars show the proportion of time for each I/O location.}
  \label{fig:perfsim-results}
\end{figure*}

We developed a performance simulator based on our performance model to evaluate different data loading strategies.
The simulator supports arbitrary dataset, system, and I/O strategy configurations.
We do not aim for a precise simulation of training, but rather to capture the relative performance of different I/O strategies.
To this end, we adapt the performance model from Sec.~\ref{sec:perfmodel}, where compute/communication throughput is based on $c$, and assume I/O is overlapped to the greatest extent possible.
We focus on four cases:
\begin{enumerate}
\item $S < d_1$: The dataset fits into the first storage class (typically RAM) of each worker.
  This should not be a challenging situation, but is nevertheless important, as it occurs with small datasets or workers with large amounts of RAM.
\item $d_1 < S < D$: The dataset fits in the aggregate storage of a worker.
  This scenario is interesting, as while a worker can cache the entire dataset, it must use multiple storage classes to do so.
\item $D < S < ND$: The dataset can be cached in the aggregate storage of all workers.
  This requires workers to exploit distributed caching and to minimize the number of PFS accesses.
\item $ND < S$: The dataset is too large to be cached even by the aggregate storage of all workers.
  While this is an uncommon scenario today when using many workers, it is interesting to examine, especially as dataset sizes grow in the future.
  Further, this scenario already occurs when large datasets are used on small training clusters.
\end{enumerate}

We study the following policies:
\begin{enumerate}
\item \texttt{Perfect}: This simulates the case where no stalls occur and provides a lower bound, although it is not realistic in practice.
\item \texttt{Naive}: Loading from the PFS with no prefetching or caching.
\item \texttt{StagingBuffer}: This fills a staging buffer according to the reference string, fetching data from a given location and dropping it after it is consumed.
  When configured to prefetch data from the PFS, this simulates the double buffering or \texttt{tf.data} policies.
\item \texttt{DeepIO}: This simulates the ordered and optimistic modes for DeepIO~\cite{zhu2018entropy}.
  The latter mode may change the access order.
\item \texttt{ParallelStaging}: This simulates data sharding, which also changes the access order, as only locally-available samples are accessed by a worker.
\item \texttt{LBANN}: This simulates the LBANN data store~\cite{jacobs2019parallelizing} (dynamic and preloading approaches).
  As this only caches data in memory, it will fail if the dataset exceeds the aggregate worker memory.
\item \texttt{LocalityAware}: This simulates the locality-aware approach of \citet{yang2019accelerating}.
  When using this policy, we reorder batches at the beginning of the simulation to correspond to the logic described in their paper.
\item \texttt{NoPFS}: NoPFS's policy (Sec.~\ref{sec:nopfs}).
\end{enumerate}

\subsection{Simulation Results}
\label{subsec:perfsim-results}

For brevity, we report simulation results for a single setup simulating a small cluster in Fig.~\ref{fig:perfsim-results}.
Each plot summarizes the execution time, and the stacked bars give the proportion of execution time spent fetching from a particular storage class.
We use $N=4$ workers, each on a dedicated node with a compute throughput of $c=64$ MB/s, a preprocessing rate $\beta=200$ MB/s, and an inter-worker bandwidth $b_c = 24{,}000$ MB/s.
We configured a $5$ GB staging buffer, and two further storage levels representing 120 GB of RAM and a 900 GB local SSD.
We use eight, four, and two prefetcher threads per storage level, and set $r_0(8) = 111$ GB/s, $r_1(4) = 85$ GB/s, and $r_2(2) = 4$ GB/s.
For PFS read throughput, we set $t(1) = 330$ MB/s, $t(2) = 730$ MB/s, $t(4) = 1{,}540$ MB/s, and $t(8) = 2{,}870$ MB/s.
These choices were based on benchmarks of the Lassen supercomputer~\cite{lassen}.

We simulate a set of representative datasets; datasets with different filesizes are assumed to be distributed normally and we vary the $\mu$ and $\sigma$ parameters and the number of samples, $F$, to match.
A per-worker batch size of $B=32$ was used, except for the large CosmoFlow datasets, where $B=16$ and $B=1$, respectively.

\textbf{Scenario 1} ($S < d_1$, $\mu = 0.76$ KB, $\sigma = 0$, $F=50{,}000$, $40$ MB, MNIST~\cite{lecun1998gradient}):
This is representative of small research datasets commonly used in practice.
There is relatively little difference in performance for most policies, and they are close to the lower bound.
The exception is \texttt{Naive}, which is $1.7\times$ slower than the best-performing policy, illustrating the importance of proper I/O handling.

\textbf{Scenario 2} ($d_1 < S < D$, $\mu = 0.1077$ MB, $\sigma = 0.1$, $F = 1{,}281{,}167$, $135$ GB, ImageNet-1k~\cite{ILSVRC15}; and $\mu = 0.2937$ MB, $\sigma = 0.2$, $F = 1{,}743{,}042$, $500$ GB, OpenImages~\cite{kuznetsova2018open}):
Here, NoPFS is the best-performing policy, and is very close to the theoretical lower bound.
There are several key factors behind this performance: NoPFS does not require an initialization phase (in contrast to data staging), reduces PFS reads (whereas \texttt{StagingBuffer} policies always read from the PFS), and utilizes all available resources (in contrast to the LBANN data store, which uses only RAM).

\textbf{Scenario 3} ($ND < S$, $\mu = 0.1077$ MB, $\sigma = 0.2$, $F = 14{,}197{,}122$, $1{,}500$ GB, ImageNet-22k~\cite{deng2009imagenet}):
In this scenario, the dataset exceeds the aggregate storage capacity of each worker.
Further, the LBANN data store no longer supports the dataset, as it is larger than aggregate RAM.
The DeepIO ordered mode performs poorly here, since it fetches uncached samples from the PFS and does not consider access frequency for assigning samples.
NoPFS again performs well and approaches the lower bound.
DeepIO and parallel data staging are also able to perform well, as they never access the PFS, mitigating the impact of the large dataset size.
However, \emph{they no longer access the entire dataset}, significantly impacting potential accuracy during training.

\textbf{Scenario 4} ($ND < S$, $\mu = 17$ MB, $\sigma = 0$, $F = 262{,}144$, $4$~TB, CosmoFlow~\cite{mathuriya2018cosmoflow}; and $N=8$, $\mu = 1{,}000$ MB, $\sigma = 0$, $F = 10{,}000$, $10$ TB, CosmoFlow $512^3$~\cite{oyama2020case}):
We simulate two versions of the CosmoFlow dataset, which are representative of emerging scientific workloads.
The standard CosmoFlow dataset, part of MLPerf-HPC~\cite{mlperf-hpc}, consists of a large number of $128^3$ samples.
That dataset is derived from the CosmoFlow $512^3$ dataset, which consists of a smaller number of large, $512^3$ samples that are sliced to produce the $128^3$ samples.
We use 8 workers for CosmoFlow $512^3$.
In both cases, the dataset size exceeds the storage capacity of the cluster.
Performance is similar to Scenario 3, but at larger scale, indicating NoPFS is able to strong scale well with dataset size and cluster resources while still providing access to the full dataset.

\subsection{Environment Evaluation}
\label{subsec:perfsim-env}

\begin{figure}[t]
  \centering
  \includegraphics[width=0.9\linewidth]{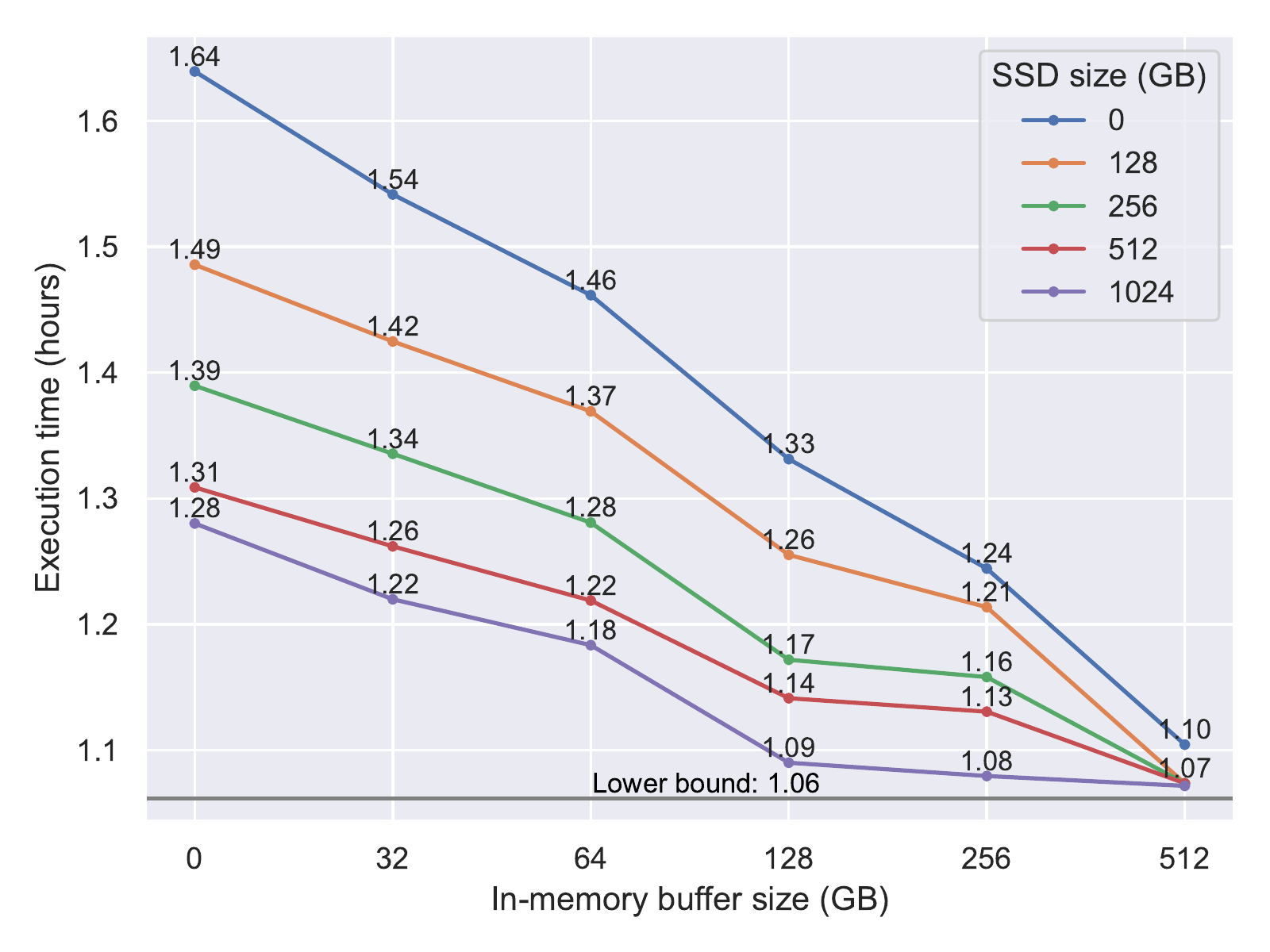}\vspace{-1em}
  \caption{Performance simulation for ImageNet-22k.}
  \label{fig:perfsim-env}
\end{figure}

In addition to comparing I/O policy performance, our simulator can also be used to quantify the impact of changes to a system on training time.
This can be used to identify promising hardware upgrades or when designing new systems to meet training requirements.

To illustrate this, we consider the ImageNet-22k dataset from Scenario 3 with the NoPFS policy and vary the system configuration, assuming $5\times$ compute and preprocessing throughput, which is representative of future machine learning accelerators.
The lower bound on runtime is 1.06 hours.
We first simulate using only a staging buffer of size $1$, $2$, $4$, or $5$ GB, which all resulted in runtimes of 1.64 hours, indicating that the staging buffer is not the limiting factor in this configuration; we fix it at $5$ GB.
We next considered configurations with $32$, $64$, $128$, $256$, or $512$ GB of RAM and $128$, $256$, $512$, or $1024$ GB of SSD as additional storage classes.
The performance for these configurations is illustrated in Fig.~\ref{fig:perfsim-env}.

We observe that, while the best performance is achieved by maximizing total storage, different combinations of storage can often be used to achieve a given performance level if other factors (e.g., cost) need to be optimized for.
Notably, if memory is maximized, then SSD size becomes less relevant.
Alternatively, if memory is expensive, it can be compensated for with additional SSD storage.
This demonstrates why it is critical that an I/O framework be able to automatically adapt to many different storage backends.


\section{Evaluation}
\label{sec:eval}
\begin{figure*}[t]
  \centering
  \includegraphics[width=0.48\linewidth]{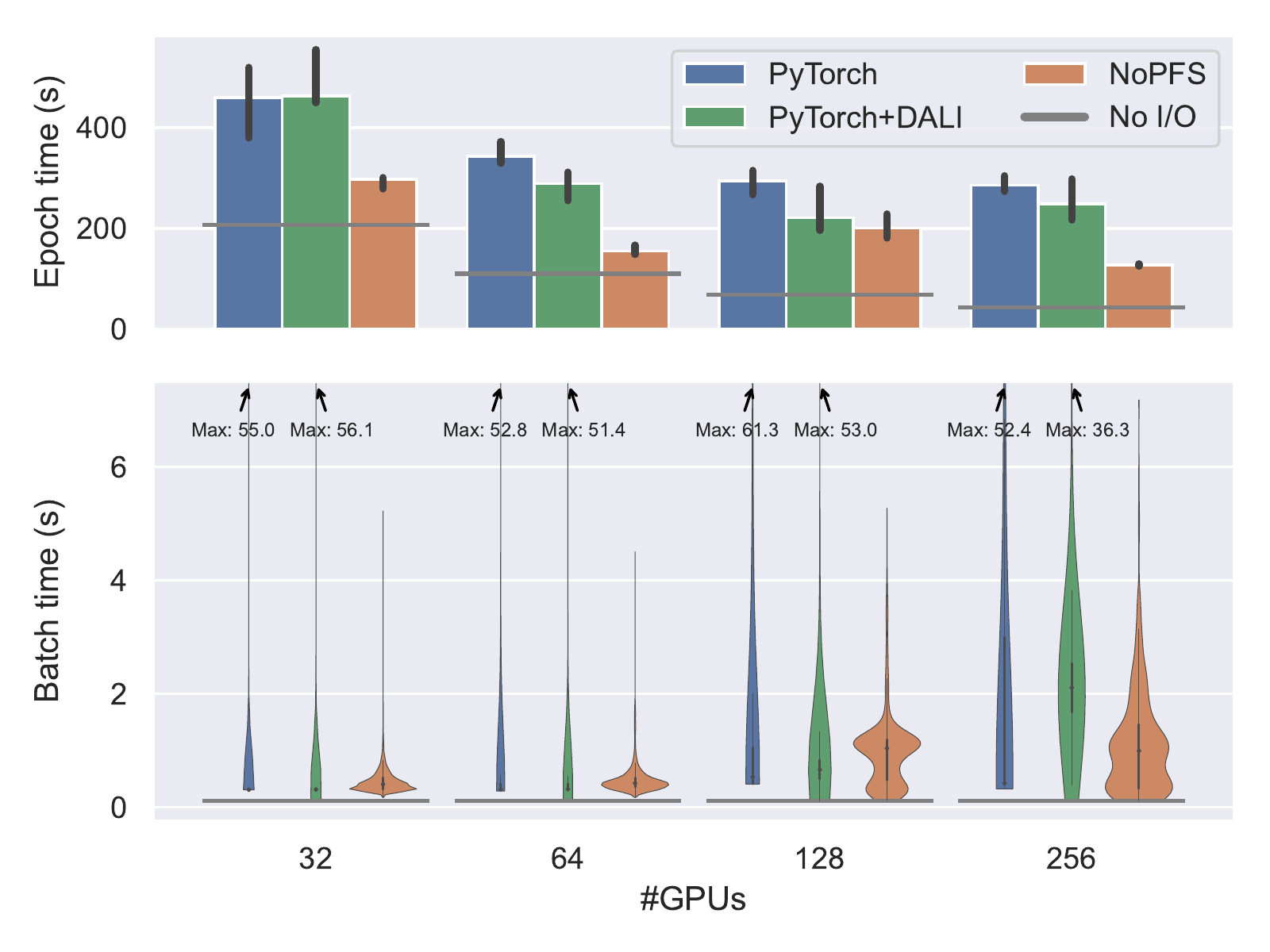}\quad
  \includegraphics[width=0.48\linewidth]{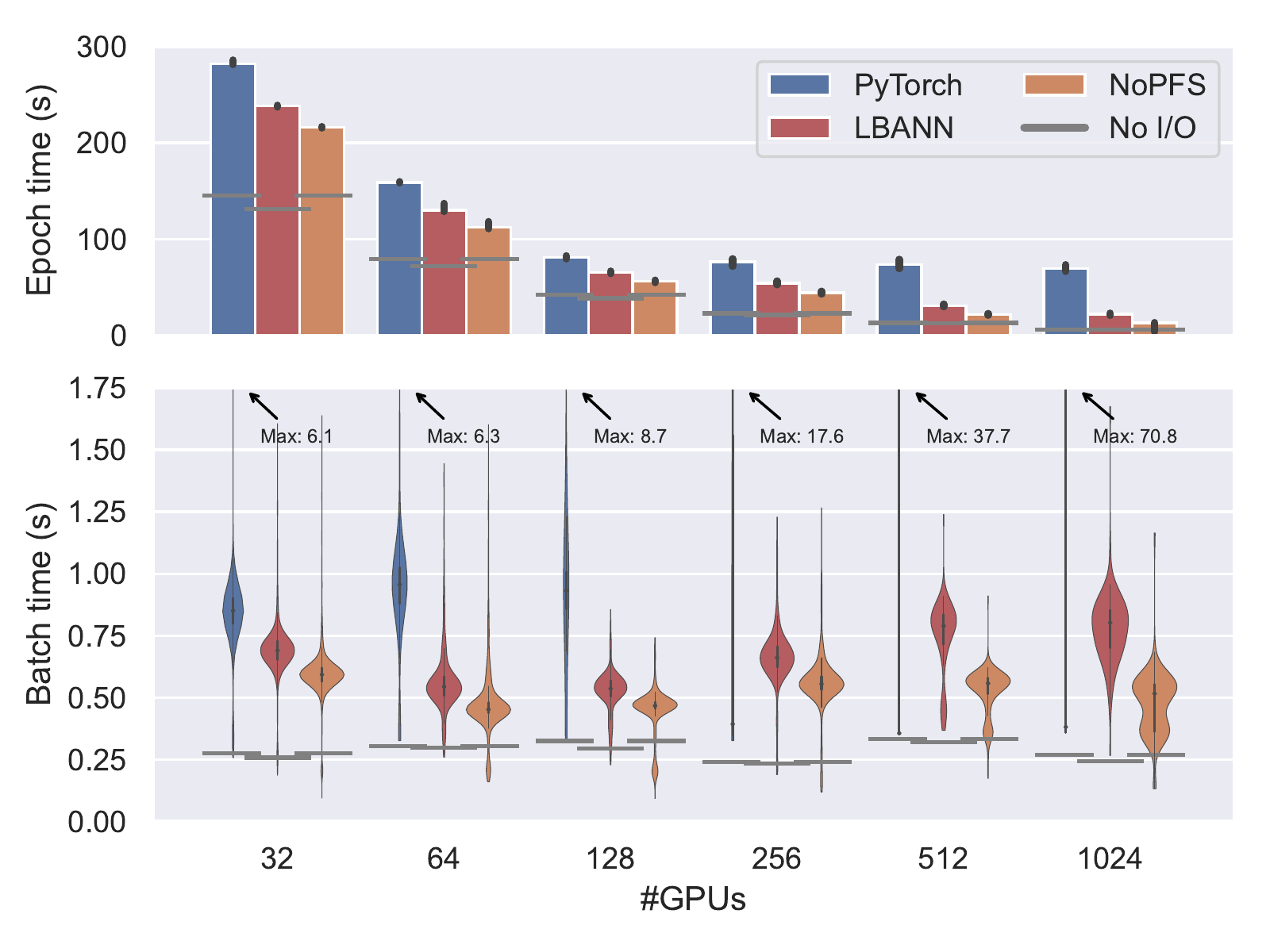}\vspace{-1em}
  \caption{Epoch \& batch time for training ResNet-50 on ImageNet-1k on Piz Daint (left) and Lassen (right) (excl. epoch 0). NoPFS is up to $2.2\times$ faster than PyTorch on Piz Daint and up to $5.4\times$ faster on Lassen; it is also up to $1.7\times$ faster than LBANN.}
  \label{fig:eval-in1k}
\end{figure*}

\begin{figure}[t]
  \centering
  \includegraphics[width=\linewidth]{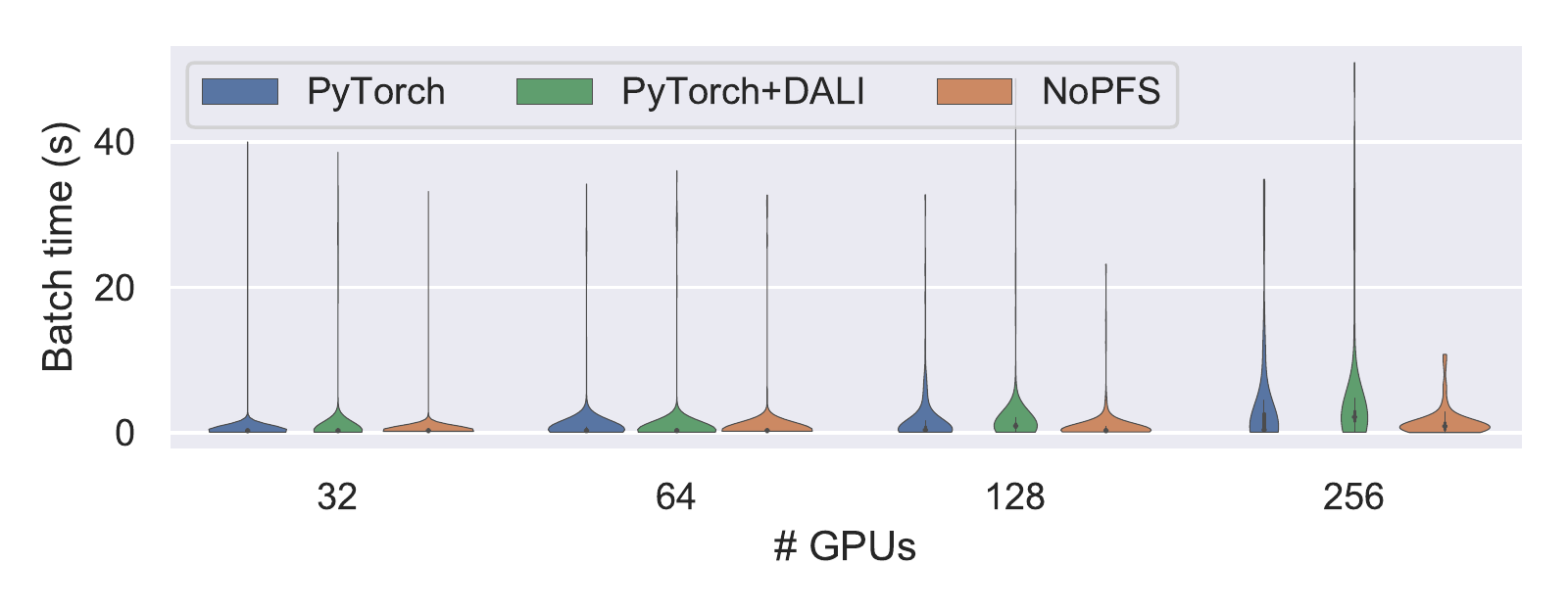}\vspace{-1em}
  \caption{Epoch 0 batch time for ImageNet-1k on Piz Daint.}
  \label{fig:eval-daint-batch-epoch0}
  \vspace{-1em}
\end{figure}

We now experimentally validate NoPFS and compare it to PyTorch using both its built-in \texttt{DataLoader} and DALI~\cite{dali}.
Our experiments use the Piz Daint~\cite{daint} and Lassen~\cite{lassen} supercomputers.
Fig.~\ref{fig:hwinf} provides system details (Lassen uses the same architecture as Sierra).
All runs begin with data at rest on a PFS, in line with MLPerf-HPC guidelines~\cite{mlperf-hpc}.
We perform each run in a separate job allocation to mitigate caching effects.
On Lassen, we use one rank per GPU.

\textbf{Frameworks}~~~~We use PyTorch 1.7 with NCCL2 for all PyTorch benchmarks.
For each model, we endeavored to provide a highly-optimized baseline and all runs use the same training implementation.
We evaluated four different frameworks for I/O:
\begin{itemize}
\item \texttt{PyTorch}: The built-in PyTorch \texttt{DataLoader} and \texttt{Distributed}-\texttt{Sampler}, with multiple prefetching and preprocessing threads.
\item \texttt{DALI}: DALI 0.31.0 for prefetching and preprocessing.
  DALI only supports x86 CPUs, so we only report results for Piz Daint.
\item \texttt{NoPFS}: Our NoPFS implementation, integrated into PyTorch.
\end{itemize}
On Lassen, a NoPFS rank (four per node) uses a 5 GiB staging buffer with eight prefetching threads, 25 GiB of RAM with four prefetching threads, and 300 GiB of SSD with two prefetching threads.
On Piz Daint, it uses a 5 GiB staging buffer with four prefetching threads and 40 GiB of RAM with two prefetching threads.
(We used as much memory as possible without OOM errors.)
To ensure preprocessing is not a bottleneck, we extended PyTorch and NoPFS to perform some data augmentation and conversion on a separate CUDA stream on GPU; DALI does this automatically.
\begin{itemize}
  \item \texttt{LBANN}: LBANN, using its data store in dynamic mode. In this mode, each sample is cached in memory by the worker that reads it first. LBANN requires sufficient memory for its cache.
\end{itemize}

\textbf{Datasets}~~~~We use three datasets, of significantly different sizes and representative of different tasks:
\begin{itemize}
\item ImageNet-1k~\cite{ILSVRC15}: We train ResNet-50 on ImageNet-1k as a standard baseline.
  ImageNet-1k consists of $1{,}281{,}167$ images and $1{,}000$ classes, totalling 135 GB.
  We use the standard data layout, with one directory per class containing all images of that class.
  We use a per-GPU batch size of 64 on Piz Daint and 120 on Lassen.
  Standard data augmentation (random resizes, crops, flips, and normalization) is performed.
\item ImageNet-22k~\cite{deng2009imagenet}: We train ResNet-50 on the larger ImageNet-22k dataset, which consists of $14{,}197{,}103$ samples and $21{,}841$ classes, totalling 1.3 TB.
  This is more representative of larger, emerging datasets used for unsupervised or semisupervised pretraining.
  The configuration is otherwise identical to ImageNet-1k.
\item CosmoFlow~\cite{mathuriya2018cosmoflow}: We use the MLPerf-HPC~\cite{mlperf-hpc} CosmoFlow model and dataset.
  The data consists of $262{,}144$ simulated 3D universes of size $128 \times 128 \times 128$ and four channels, stored in 16-bit integer format, totalling 4 TB.
  Instead of the original HDF5 data format, we converted the data to a simple binary format.
  As HDF5 requires locking to serialize I/O accesses, we found it did not perform well, with median batch times of 3.2 s.
  We use a per-GPU batch size of 16 and log normalization on Lassen.
\end{itemize}

\textbf{Synthetic data lower bound}~~~~To provide a lower bound for training with no I/O and minimal perturbation, we use synthetic data.
We pregenerate random samples in RAM of the appropriate size and data type and use them for training.
The decoding, preprocessing, augmentation, and other aspects of training are otherwise identical to regular training.
We report this as ``No I/O'' in plots.
As this measurement is experimental, some results are slightly faster.
Since LBANN has slightly different performance than PyTorch, its ``No I/O'' performance is measured separately.

\vspace{-0.3em}
\subsection{I/O Performance}
\label{subsec:eval-io-time}

We first compare the training performance of each framework.
We evaluate ImageNet-1k on both Lassen and Piz Daint (Fig.~\ref{fig:eval-in1k}), and the remaining datasets on Lassen (Figs.~\ref{fig:eval-in22k} and~\ref{fig:eval-cf}).
We report median time per epoch with a 95\% confidence interval, using 10 epochs for ImageNet-1k and CosmoFlow and 3 epochs for ImageNet-22k.
We also show violin plots of the per-batch time, skipping the first epoch (which has consistently high variance due to initial data access).
NoPFS consistently has the fastest runtimes and small variance, even on Piz Daint, where the variance of other frameworks is high.

\textbf{ImageNet-1k}~~~~On Piz Daint, NoPFS is $2.2\times$ faster than the PyTorch \texttt{DataLoader} and $1.9\times$ faster than PyTorch+DALI on 256 GPUs.
On Lassen, it is $5.4\times$ faster than PyTorch and $1.7\times$ faster than LBANN on 1024 GPUs.
We observe consistent scaling, except for 128 nodes on Piz Daint, where significant system noise in the NoPFS run resulted in worse performance relative to NoPFS at 64 GPUs.
Despite this, NoPFS is still faster than others.
On Piz Daint, DALI offers a relatively small performance improvement over the default PyTorch \texttt{DataLoader}, likely because its data augmentation pipeline is already well-optimized, including offloading some augmentation to GPU.
Due to PFS contention limiting I/O, PyTorch does not scale beyond 256 GPUs on Lassen.
Despite LBANN being a faster framework than PyTorch in this benchmark (as its no I/O baseline indicates), NoPFS in PyTorch is still able to outperform it.
At small scale, the difference in performance is minimal, but it becomes more significant at larger scale.
This is because LBANN's data store uses a simple first-touch policy for caching samples, and caches each sample in only one location.
Hence, at larger scales, many samples need to be fetched from remote nodes.
While this avoids issues of PFS contention, it is suboptimal compared to NoPFS's access frequency-based caching.

In per-batch runtimes, NoPFS exhibits significantly less variance at all scales than other methods.
Its batches are also fast much more consistently.
This demonstrates a key performance advantage of NoPFS: reducing tail events where read performance is catastrophically slow due to system contention by using local or remote caches.
After the first epoch, PyTorch and DALI exhibit tail events an order of magnitude larger than NoPFS.
We also examined the batch times in the first epoch (Fig.~\ref{fig:eval-daint-batch-epoch0}) on Piz Daint.
NoPFS shows comparable or only slightly lower variance to the other methods, as all must initially access data from the PFS, although NoPFS mitigates this with its prefetching.
However, for PyTorch and DALI, the variance here is comparable to the variance in subsequent epochs: without caching, it is always ``the first epoch'' for a data loader.

\begin{figure}[t]
  \centering
  \includegraphics[width=\linewidth]{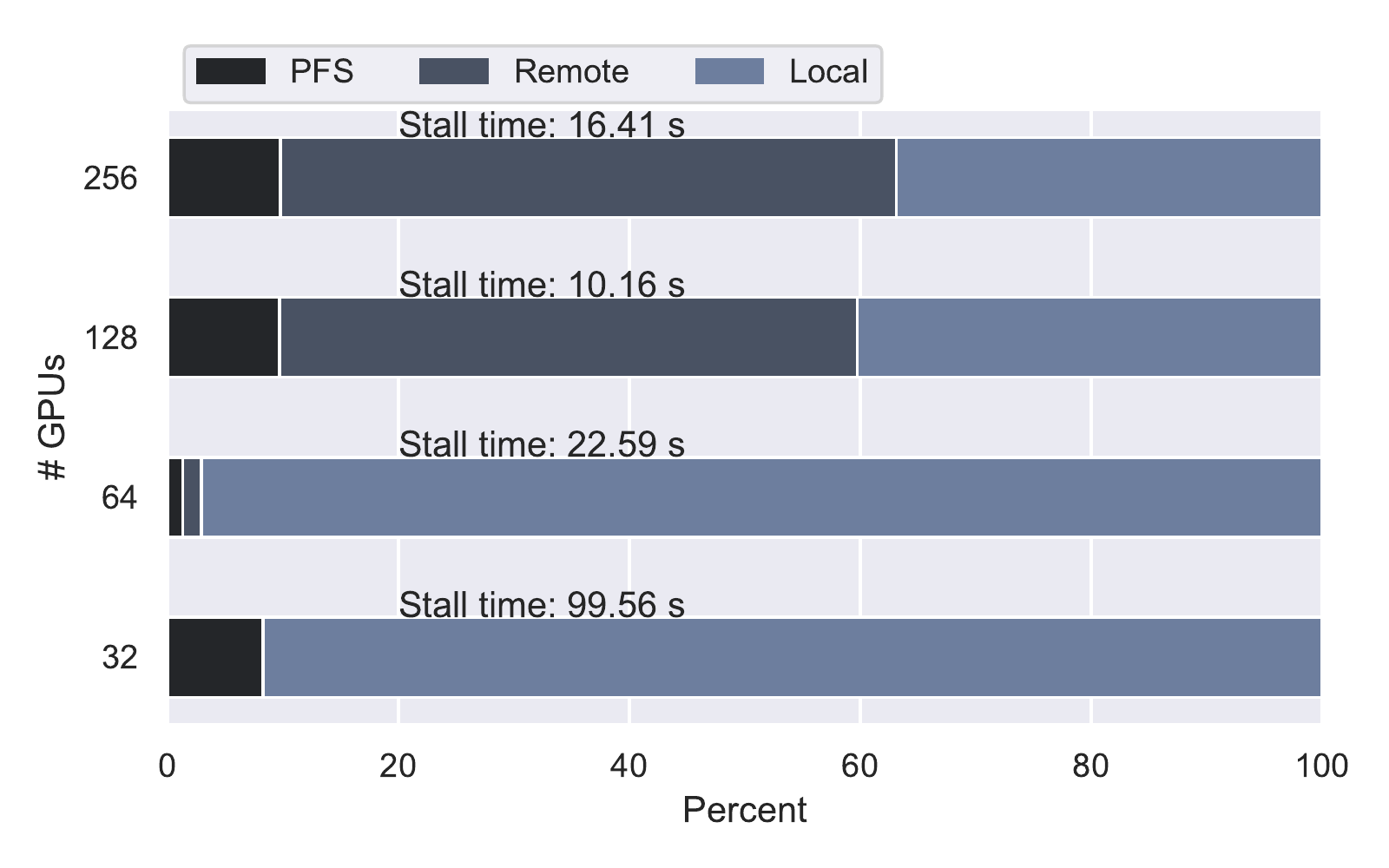}\vspace{-1em}
  \caption{NoPFS cache stats for ImageNet-1k on Piz Daint.}
  \label{fig:eval-daintcache}
\end{figure}

To break down the source of NoPFS's improvements, Fig.~\ref{fig:eval-daintcache} presents the stall time and the percent of staging buffer prefetches that were from local storage, a remote node's cache, or the PFS, aggregated over all epochs.
Stall time decreases at larger scales, as NoPFS is able to strong scale to take advantage of additional storage across the cluster.
The fetch locations also demonstrate how NoPFS adapts to changing cluster conditions.
At smaller scales, the PFS is under less contention, and NoPFS is able to prefetch into on-node memory quickly.
Further, as the number of GPUs increases, each node sees a smaller portion of the dataset, making the prefetching task easier.
However, beyond 64 GPUs, it becomes slower to read from the PFS, and NoPFS instead fetches samples from remote nodes that have already cached them.

\begin{figure}[t]
  \centering
  \includegraphics[width=\linewidth]{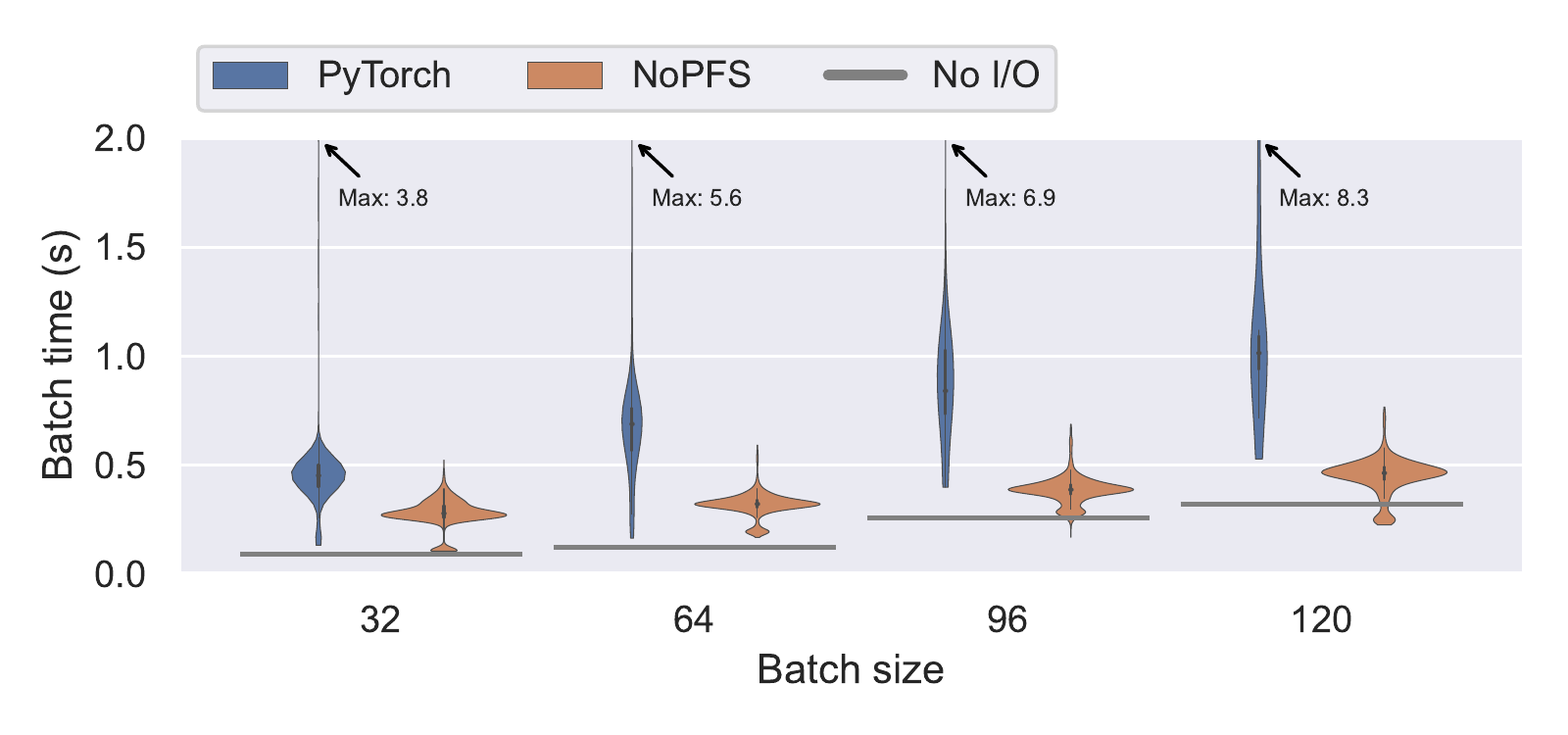}\vspace{-1em}
  \caption{Varying batch sizes for training ResNet-50 on ImageNet-1k on 128 GPUs on Lassen (excl. epoch 0).}
  \label{fig:eval-mb}
\end{figure}

\textbf{Impact of Batch Size}~~~~It is common to vary the batch size when training, depending on how one wishes to trade off memory and learning versus GPU utilization.
To study this effect, we compare NoPFS with PyTorch when training ImageNet-1k on 128 GPUs on Lassen (Fig.~\ref{fig:eval-mb}).
We observe that NoPFS is faster at every batch size (the runtime per batch necessarily increases with larger batch sizes, due to more samples being processed).
Further, while the variance in runtime stays roughly constant for NoPFS, for PyTorch it increases significantly with larger batches, due to additional I/O pressure caused by each rank fetching more data.

\textbf{ImageNet-22k \& CosmoFlow}~~~~Both of these datasets demonstrate similar performance trends as for ImageNet-1k on Lassen.
At 1024 GPUs, NoPFS is $2.4\times$ faster on ImageNet-22k and $2.1\times$ faster on CosmoFlow.
This demonstrates how NoPFS is able to automatically adapt to very different datasets: Either many more samples (ImageNet-22k) or much more data (CosmoFlow).
For CosmoFlow in particular, NoPFS is very close to the no I/O lower bound.
NoPFS also automatically takes advantage of SSDs to cache parts of the CosmoFlow dataset at small scale, when the aggregate node memory is insufficient to hold the dataset.

The batch times for CosmoFlow also exhibit an interesting bimodal distribution.
This is because the samples are all the same, large size (16 MB), leading to significantly different runtimes depending on where the sample is fetched from.

\textbf{Discussion}~~~~While NoPFS shows large performance improvements across systems, scales, and datasets, there is still a gap between its performance and the no I/O lower bound.
We profiled the ImageNet-1k training with 32 GPUs on Lassen and observed that NCCL allreduces took up to $2\times$ longer when performing I/O than without I/O.
We believe this is due to increased contention when performing I/O, as I/O threads interfere with NCCL's communication threads and I/O traffic goes over the same network as allreduces.
This problem is more acute for ImageNet than for CosmoFlow, as the former uses much larger batches and smaller, variable-sized samples, resulting in much more frequent I/O requests.
Indeed, this ``I/O noise'' presents a more general problem: since training is bulk synchronous due to the allreduces in each mini-batch, I/O noise becomes a barrier to scalability~\cite{hoefler2010characterizing,petrini2003case}.
NoPFS helps to significantly reduce this, but better characterizing and mitigating I/O noise (e.g., dedicated I/O cores or storage networks) are important future work.

In general, NoPFS's distributed caching means that samples are read from the PFS as few times as necessary; typically only once for an entire training run when the dataset fits in the aggregate storage.
This has two advantages: NoPFS suffers from less noise due to contention on the PFS, and it has a lower impact on other jobs that may be using the PFS in a shared cluster environment.
Further, as NoPFS scales, it can take advantage of additional distributed memory.
Perhaps counterintuitively, because of very high-speed networks and better random-access performance, reading from \emph{remote} memory can be faster than reading from a local SSD.
While remote accesses increase network traffic, which can interfere with allreduces, not using NoPFS \emph{would still require similar network communication}, since the PFS is accessed over the same network in our systems, while NoPFS avoids PFS contention.
Overall, NoPFS introduces very little overhead (compared to a standard I/O framework, it only needs to compute the access sequence in advance, which is fast) and in practice demonstrates large performance improvements of $2\times$--$5\times$.

\subsection{End-to-end Training}
\label{subsec:eval-e2e}

Finally, we performed end-to-end training of ImageNet-1k using 256 GPUs on Lassen.
We use a batch size of 32 samples per GPU, for a global batch size of 8192, and follow the learning procedure in~\citet{goyal2017accurate}.
The top-1 validation accuracy over time for both NoPFS and PyTorch are shown in Fig.~\ref{fig:eval-e2e}.
In line with our benchmarks, we achieve a $1.42\times$ speedup over the standard PyTorch \texttt{DataLoader} while achieving state-of-the-art accuracy.
Both runs exhibit similar learning curves, albeit with slight variation due to different random seeds for network parameters.
(Note, due to the speedup, NoPFS's curve is compressed.)

\begin{figure}[t]
  \centering
  \includegraphics[width=\linewidth]{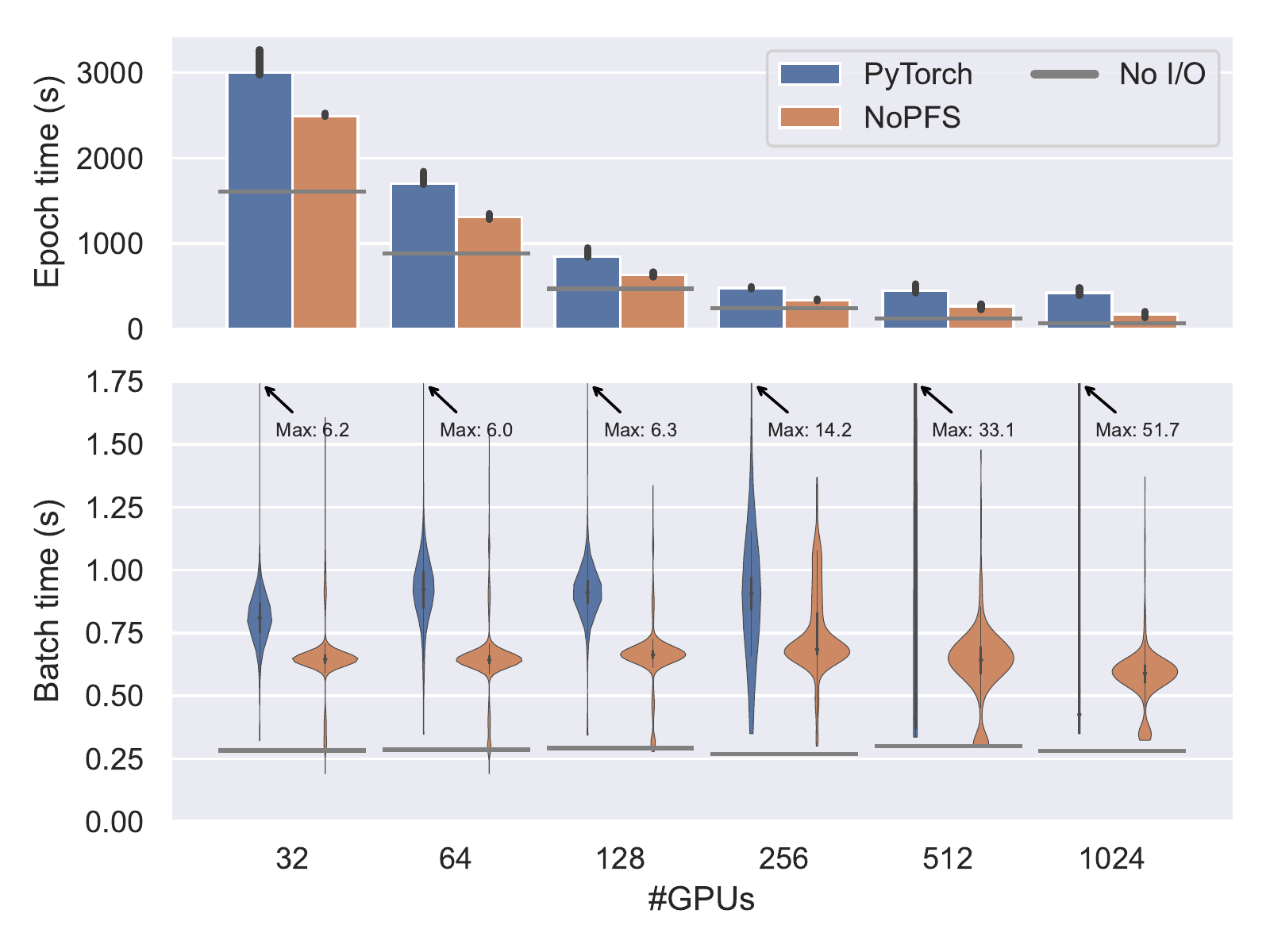}\vspace{-1em}
  \caption{Epoch \& batch time for ImageNet-22k on Lassen. NoPFS is up to $2.4\times$ faster.}
  \label{fig:eval-in22k}
\end{figure}

\begin{figure}[t]
  \centering
  \includegraphics[width=\linewidth]{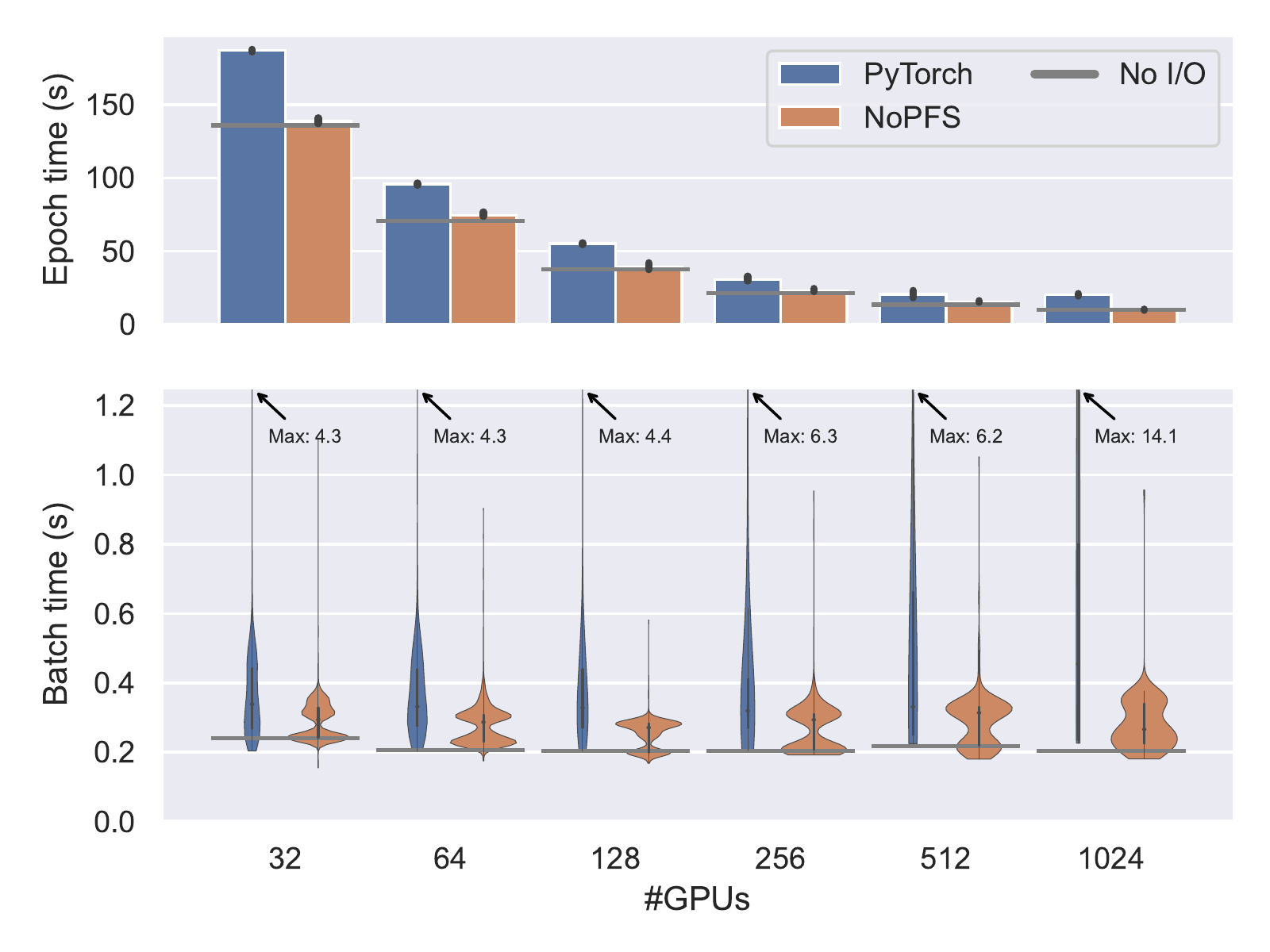}\vspace{-1em}
  \caption{Epoch \& batch time for CosmoFlow on Lassen. NoPFS is up to $2.1\times$ faster.}
  \label{fig:eval-cf}
\end{figure}

\begin{figure}[t]
  \centering
  \includegraphics[width=\linewidth]{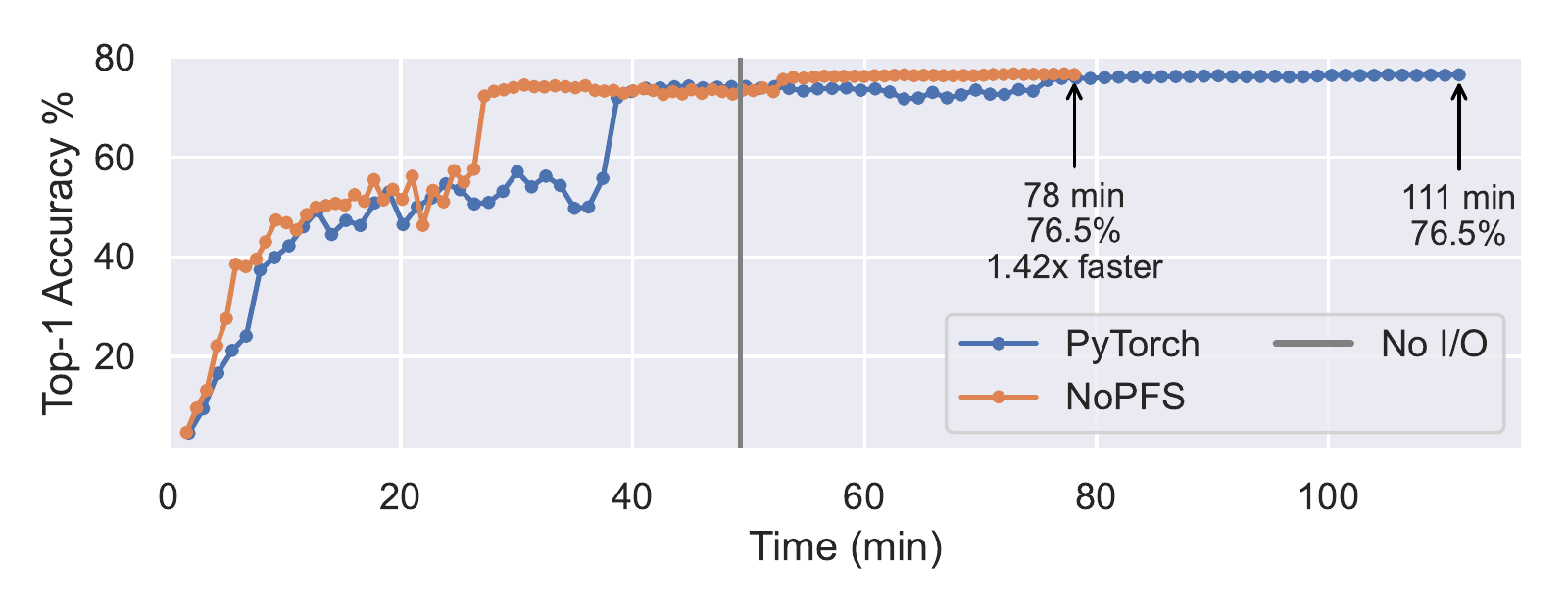}
  \includegraphics[width=\linewidth]{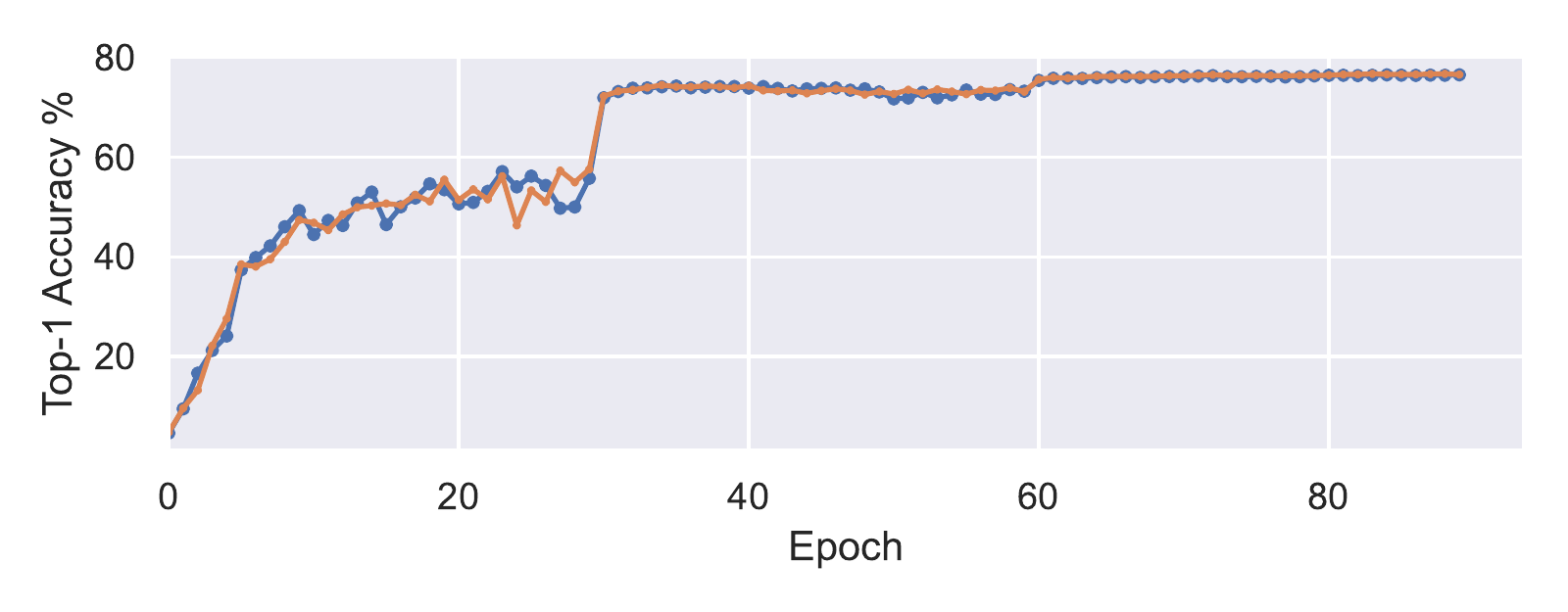}\vspace{-1em}
  \caption{Training ResNet-50 on ImageNet-1k using 256 GPUs on Lassen, accuracy vs time (top) and epochs (bottom).}
  \label{fig:eval-e2e}\vspace{-1em}
\end{figure}


\section{Related Work}
\label{sec:related}
Beyond work on optimizing ML training I/O (see Sec.~\ref{subsec:io-frameworks}), there has been work on optimizing specific aspects or infrastructure.
\citet{pumma2019scalable} optimizes LMDB, Caffe's~\cite{jia2014caffe} I/O subsystem, to address issues in \texttt{mmap} I/O request scheduling.
\citet{chowdhury2019characterization} study the performance of the BeeGFS filesystem for deep learning workloads.
\citet{chien2018characterizing} examine the impact of multi-threading in TensorFlow's I/O pipeline.
Data preprocessing and augmentation can also be a bottleneck during training, as it is typically executed by CPUs, which may be unable to keep up with accelerators.
Optimized pipelines such as DALI~\cite{dali} attempt to address this with careful engineering and by splitting preprocessing between CPU and GPU.

Beyond these, efficient distributed I/O has long been studied in the context of scientific computing applications~\cite{thakur1999data,li2003parallel,howison2010tuning}.
High-performance networks and RDMA have also been used to disaggregate memory and improve I/O performance.
Infiniswap~\cite{gu2017efficient} used RDMA for remote memory paging.
Key/value stores can leverage RDMA~\cite{jose2011memcached} or OpenSHMEM~\cite{fu2017high} for improved performance.
A similar set of work exists for distributed filesystems, which also leverage non-volatile memory, including the Hadoop Distributed Filesystem~\cite{islam2016high}, Octopus~\cite{lu2017octopus}, and Crail~\cite{stuedi2017crail}.
Additionally, distributed and hierarchical caching has been studied in other contexts, such as for video-on-demand content~\cite{koch2018category} and content delivery networks~\cite{borst2010distributed}.


\section{Conclusions}
\label{sec:concl}
Clairvoyance has long been an idea used in theoretical studies of prefetching and caching, but has been difficult to translate to practical applications with complex I/O access patterns.
With machine learning, where the access pattern is random, but predictable given the random seed that generates it, there is now an application that fully benefits from this.
Using clairvoyance, we make a probabilistic analysis of access patterns and show that there is almost always an imbalance in the frequency a worker accesses a particular sample, which we combine with a performance model to drive a hierarchical caching and prefetching policy.
NoPFS provides a simple, powerful interface that can be used in existing training pipelines to improve their I/O performance and reduce overall runtime.

As the compute throughput of accelerators continues to grow faster than that of data movement, the cost of I/O---and the importance of optimizing it---will only increase.
Further, storage hierarchies are only getting deeper and more complicated, necessitating dedicated infrastructure to fully utilize them.
Our work here serves as an initial step toward incorporating more detailed analyses of I/O into runtime frameworks.
Future directions could include NUMA- and topology-awareness for data fetching; dynamic cache management, where samples can migrate between caches; and better characterizing I/O noise.
We expect that by building on clairvoyance and other insights, the I/O bottleneck can be addressed.


\begin{acks}
  The authors thank Marc Snir for discussions inspiring this line of research, and Tim Moon, Quincey Koziol, John Ravi, and Suren Byna for feedback.
  This project received funding from the European Research Council (ERC) under the European Union’s Horizon 2020 program (grant agreement MAELSTROM, No. 955513).
  N.D. is supported by the ETH Postdoctoral Fellowship.
  T.B.N. is supported by the Swiss National Science Foundation (Ambizione Project \#185778).
  Experiments were performed at Livermore Computing and the Swiss National Supercomputing Center.
\end{acks}

\bibliographystyle{ACM-Reference-Format}
\bibliography{references}


\begin{thebibliography}{79}


\ifx \showCODEN    \undefined \def \showCODEN     #1{\unskip}     \fi
\ifx \showDOI      \undefined \def \showDOI       #1{#1}\fi
\ifx \showISBNx    \undefined \def \showISBNx     #1{\unskip}     \fi
\ifx \showISBNxiii \undefined \def \showISBNxiii  #1{\unskip}     \fi
\ifx \showISSN     \undefined \def \showISSN      #1{\unskip}     \fi
\ifx \showLCCN     \undefined \def \showLCCN      #1{\unskip}     \fi
\ifx \shownote     \undefined \def \shownote      #1{#1}          \fi
\ifx \showarticletitle \undefined \def \showarticletitle #1{#1}   \fi
\ifx \showURL      \undefined \def \showURL       {\relax}        \fi
\providecommand\bibfield[2]{#2}
\providecommand\bibinfo[2]{#2}
\providecommand\natexlab[1]{#1}
\providecommand\showeprint[2][]{arXiv:#2}

\bibitem[\protect\citeauthoryear{Abadi, Agarwal, Barham, Brevdo, Chen, Citro,
  Corrado, Davis, Dean, Devin, Ghemawat, Goodfellow, Harp, Irving, Isard, Jia,
  Jozefowicz, Kaiser, Kudlur, Levenberg, Man\'{e}, Monga, Moore, Murray, Olah,
  Schuster, Shlens, Steiner, Sutskever, Talwar, Tucker, Vanhoucke, Vasudevan,
  Vi\'{e}gas, Vinyals, Warden, Wattenberg, Wicke, Yu, and Zheng}{Abadi
  et~al\mbox{.}}{2015}]%
        {tensorflow2015-whitepaper}
\bibfield{author}{\bibinfo{person}{Mart\'{\i}n Abadi}, \bibinfo{person}{Ashish
  Agarwal}, \bibinfo{person}{Paul Barham}, \bibinfo{person}{Eugene Brevdo},
  \bibinfo{person}{Zhifeng Chen}, \bibinfo{person}{Craig Citro},
  \bibinfo{person}{Greg~S. Corrado}, \bibinfo{person}{Andy Davis},
  \bibinfo{person}{Jeffrey Dean}, \bibinfo{person}{Matthieu Devin},
  \bibinfo{person}{Sanjay Ghemawat}, \bibinfo{person}{Ian Goodfellow},
  \bibinfo{person}{Andrew Harp}, \bibinfo{person}{Geoffrey Irving},
  \bibinfo{person}{Michael Isard}, \bibinfo{person}{Yangqing Jia},
  \bibinfo{person}{Rafal Jozefowicz}, \bibinfo{person}{Lukasz Kaiser},
  \bibinfo{person}{Manjunath Kudlur}, \bibinfo{person}{Josh Levenberg},
  \bibinfo{person}{Dandelion Man\'{e}}, \bibinfo{person}{Rajat Monga},
  \bibinfo{person}{Sherry Moore}, \bibinfo{person}{Derek Murray},
  \bibinfo{person}{Chris Olah}, \bibinfo{person}{Mike Schuster},
  \bibinfo{person}{Jonathon Shlens}, \bibinfo{person}{Benoit Steiner},
  \bibinfo{person}{Ilya Sutskever}, \bibinfo{person}{Kunal Talwar},
  \bibinfo{person}{Paul Tucker}, \bibinfo{person}{Vincent Vanhoucke},
  \bibinfo{person}{Vijay Vasudevan}, \bibinfo{person}{Fernanda Vi\'{e}gas},
  \bibinfo{person}{Oriol Vinyals}, \bibinfo{person}{Pete Warden},
  \bibinfo{person}{Martin Wattenberg}, \bibinfo{person}{Martin Wicke},
  \bibinfo{person}{Yuan Yu}, {and} \bibinfo{person}{Xiaoqiang Zheng}.}
  \bibinfo{year}{2015}\natexlab{}.
\newblock \bibinfo{title}{{TensorFlow}: Large-Scale Machine Learning on
  Heterogeneous Systems}.
\newblock
\newblock
\urldef\tempurl%
\url{https://www.tensorflow.org/}
\showURL{%
\tempurl}


\bibitem[\protect\citeauthoryear{Abodo, Rittmuller, Sumner, and
  Berthaume}{Abodo et~al\mbox{.}}{2018}]%
        {abodo2018detecting}
\bibfield{author}{\bibinfo{person}{Franklin Abodo}, \bibinfo{person}{Robert
  Rittmuller}, \bibinfo{person}{Brian Sumner}, {and} \bibinfo{person}{Andrew
  Berthaume}.} \bibinfo{year}{2018}\natexlab{}.
\newblock \showarticletitle{Detecting Work Zones in {SHRP} 2 {NDS} Videos Using
  Deep Learning Based Computer Vision}. In \bibinfo{booktitle}{\emph{17th IEEE
  International Conference on Machine Learning and Applications (ICMLA)}}.
\newblock


\bibitem[\protect\citeauthoryear{Abu-El-Haija, Kothari, Lee, Natsev, Toderici,
  Varadarajan, and Vijayanarasimhan}{Abu-El-Haija et~al\mbox{.}}{2016}]%
        {abu2016youtube}
\bibfield{author}{\bibinfo{person}{Sami Abu-El-Haija}, \bibinfo{person}{Nisarg
  Kothari}, \bibinfo{person}{Joonseok Lee}, \bibinfo{person}{Paul Natsev},
  \bibinfo{person}{George Toderici}, \bibinfo{person}{Balakrishnan
  Varadarajan}, {and} \bibinfo{person}{Sudheendra Vijayanarasimhan}.}
  \bibinfo{year}{2016}\natexlab{}.
\newblock \showarticletitle{{YouTube-8M}: A large-scale video classification
  benchmark}.
\newblock \bibinfo{journal}{\emph{arXiv preprint arXiv:1609.08675}}
  (\bibinfo{year}{2016}).
\newblock


\bibitem[\protect\citeauthoryear{Agrawal, Bender, Das, Kuszmaul, Peserico, and
  Scquizzato}{Agrawal et~al\mbox{.}}{2021}]%
        {agrawal2021tight}
\bibfield{author}{\bibinfo{person}{Kunal Agrawal}, \bibinfo{person}{Michael~A
  Bender}, \bibinfo{person}{Rathish Das}, \bibinfo{person}{William Kuszmaul},
  \bibinfo{person}{Enoch Peserico}, {and} \bibinfo{person}{Michele
  Scquizzato}.} \bibinfo{year}{2021}\natexlab{}.
\newblock \showarticletitle{Tight Bounds for Parallel Paging and Green Paging}.
  In \bibinfo{booktitle}{\emph{Proceedings of the 2021 ACM-SIAM Symposium on
  Discrete Algorithms (SODA)}}. \bibinfo{pages}{3022--3041}.
\newblock


\bibitem[\protect\citeauthoryear{Albers and B{\"u}ttner}{Albers and
  B{\"u}ttner}{2005}]%
        {albers2005integrated}
\bibfield{author}{\bibinfo{person}{Susanne Albers} {and}
  \bibinfo{person}{Markus B{\"u}ttner}.} \bibinfo{year}{2005}\natexlab{}.
\newblock \showarticletitle{Integrated prefetching and caching in single and
  parallel disk systems}.
\newblock \bibinfo{journal}{\emph{Information and Computation}}
  \bibinfo{volume}{198}, \bibinfo{number}{1} (\bibinfo{year}{2005}).
\newblock


\bibitem[\protect\citeauthoryear{Albers, Garg, and Leonardi}{Albers
  et~al\mbox{.}}{2000}]%
        {albers2000minimizing}
\bibfield{author}{\bibinfo{person}{Susanne Albers}, \bibinfo{person}{Naveen
  Garg}, {and} \bibinfo{person}{Stefano Leonardi}.}
  \bibinfo{year}{2000}\natexlab{}.
\newblock \showarticletitle{Minimizing stall time in single and parallel disk
  systems}.
\newblock \bibinfo{journal}{\emph{Journal of the ACM (JACM)}}
  \bibinfo{volume}{47}, \bibinfo{number}{6} (\bibinfo{year}{2000}).
\newblock


\bibitem[\protect\citeauthoryear{Albers and Witt}{Albers and Witt}{2001}]%
        {albers2001minimizing}
\bibfield{author}{\bibinfo{person}{Susanne Albers} {and}
  \bibinfo{person}{Carsten Witt}.} \bibinfo{year}{2001}\natexlab{}.
\newblock \showarticletitle{Minimizing stall time in single and parallel disk
  systems using multicommodity network flows}.
\newblock In \bibinfo{booktitle}{\emph{Approximation, Randomization, and
  Combinatorial Optimization: Algorithms and Techniques}}.
\newblock


\bibitem[\protect\citeauthoryear{Amb\"{u}hl and Weber}{Amb\"{u}hl and
  Weber}{2003}]%
        {ambuhl2003parallel}
\bibfield{author}{\bibinfo{person}{Christoph Amb\"{u}hl} {and}
  \bibinfo{person}{Birgitta Weber}.} \bibinfo{year}{2003}\natexlab{}.
\newblock \showarticletitle{Parallel Prefetching and Caching is {NP-hard}}.
\newblock  (\bibinfo{year}{2003}).
\newblock


\bibitem[\protect\citeauthoryear{{Argonne Leadership Compute
  Facility}}{{Argonne Leadership Compute Facility}}{2021}]%
        {daos}
\bibfield{author}{\bibinfo{person}{{Argonne Leadership Compute Facility}}.}
  \bibinfo{year}{2021}\natexlab{}.
\newblock \bibinfo{title}{Aurora}.
\newblock
\newblock
\urldef\tempurl%
\url{https://www.alcf.anl.gov/aurora}
\showURL{%
\tempurl}


\bibitem[\protect\citeauthoryear{Awan, Hamidouche, Hashmi, and Panda}{Awan
  et~al\mbox{.}}{2017}]%
        {awan2017s}
\bibfield{author}{\bibinfo{person}{Ammar~Ahmad Awan}, \bibinfo{person}{Khaled
  Hamidouche}, \bibinfo{person}{Jahanzeb~Maqbool Hashmi}, {and}
  \bibinfo{person}{Dhabaleswar~K Panda}.} \bibinfo{year}{2017}\natexlab{}.
\newblock \showarticletitle{{S-Caffe}: Co-designing {MPI} runtimes and {Caffe}
  for scalable deep learning on modern {GPU} clusters}. In
  \bibinfo{booktitle}{\emph{Proceedings of the 22nd ACM SIGPLAN Symposium on
  Principles and Practice of Parallel Programming}}. \bibinfo{pages}{193--205}.
\newblock


\bibitem[\protect\citeauthoryear{Awan, Manian, Chu, Subramoni, and Panda}{Awan
  et~al\mbox{.}}{2019}]%
        {awan2019optimized}
\bibfield{author}{\bibinfo{person}{Ammar~Ahmad Awan},
  \bibinfo{person}{Karthik~Vadambacheri Manian}, \bibinfo{person}{Ching-Hsiang
  Chu}, \bibinfo{person}{Hari Subramoni}, {and} \bibinfo{person}{Dhabaleswar~K
  Panda}.} \bibinfo{year}{2019}\natexlab{}.
\newblock \showarticletitle{Optimized large-message broadcast for deep learning
  workloads: {MPI}, {MPI + NCCL}, or {NCCL2}?}
\newblock \bibinfo{journal}{\emph{Parallel Comput.}}  \bibinfo{volume}{85}
  (\bibinfo{year}{2019}), \bibinfo{pages}{141--152}.
\newblock


\bibitem[\protect\citeauthoryear{Axboe}{Axboe}{2021}]%
        {fio}
\bibfield{author}{\bibinfo{person}{Jens Axboe}.}
  \bibinfo{year}{2021}\natexlab{}.
\newblock \bibinfo{title}{fio - Felxible {I/O} tester}.
\newblock
\newblock
\urldef\tempurl%
\url{https://fio.readthedocs.io/en/latest/fio_doc.html}
\showURL{%
\tempurl}


\bibitem[\protect\citeauthoryear{B\'{e}l\'{a}dy}{B\'{e}l\'{a}dy}{1966}]%
        {belady1966study}
\bibfield{author}{\bibinfo{person}{L\'{a}szl\'{o}~A. B\'{e}l\'{a}dy}.}
  \bibinfo{year}{1966}\natexlab{}.
\newblock \showarticletitle{A study of replacement algorithms for a
  virtual-storage computer}.
\newblock \bibinfo{journal}{\emph{IBM Systems journal}} \bibinfo{volume}{5},
  \bibinfo{number}{2} (\bibinfo{year}{1966}).
\newblock


\bibitem[\protect\citeauthoryear{Ben-Nun and Hoefler}{Ben-Nun and
  Hoefler}{2019}]%
        {bennun2019demystifying}
\bibfield{author}{\bibinfo{person}{Tal Ben-Nun} {and} \bibinfo{person}{Torsten
  Hoefler}.} \bibinfo{year}{2019}\natexlab{}.
\newblock \showarticletitle{Demystifying Parallel and Distributed Deep
  Learning: An In-Depth Concurrency Analysis}.
\newblock \bibinfo{journal}{\emph{ACM Comput. Surv.}} \bibinfo{volume}{52},
  \bibinfo{number}{4} (\bibinfo{year}{2019}).
\newblock


\bibitem[\protect\citeauthoryear{Bender, Ebrahimi, Fineman, Ghasemiesfeh,
  Johnson, and McCauley}{Bender et~al\mbox{.}}{2014}]%
        {bender2014cache}
\bibfield{author}{\bibinfo{person}{Michael~A Bender}, \bibinfo{person}{Roozbeh
  Ebrahimi}, \bibinfo{person}{Jeremy~T Fineman}, \bibinfo{person}{Golnaz
  Ghasemiesfeh}, \bibinfo{person}{Rob Johnson}, {and} \bibinfo{person}{Samuel
  McCauley}.} \bibinfo{year}{2014}\natexlab{}.
\newblock \showarticletitle{Cache-adaptive algorithms}. In
  \bibinfo{booktitle}{\emph{Proceedings of the twenty-fifth annual ACM-SIAM
  symposium on Discrete algorithms}}. \bibinfo{pages}{958--971}.
\newblock


\bibitem[\protect\citeauthoryear{Borst, Gupta, and Walid}{Borst
  et~al\mbox{.}}{2010}]%
        {borst2010distributed}
\bibfield{author}{\bibinfo{person}{Sem Borst}, \bibinfo{person}{Varun Gupta},
  {and} \bibinfo{person}{Anwar Walid}.} \bibinfo{year}{2010}\natexlab{}.
\newblock \showarticletitle{Distributed caching algorithms for content
  distribution networks}. In \bibinfo{booktitle}{\emph{Proceedings of the IEEE
  International Conference on Computer Communications (INFOCOM)}}.
\newblock


\bibitem[\protect\citeauthoryear{Bottou, Curtis, and Nocedal}{Bottou
  et~al\mbox{.}}{2018}]%
        {bottou2018optimization}
\bibfield{author}{\bibinfo{person}{L{\'e}on Bottou}, \bibinfo{person}{Frank~E
  Curtis}, {and} \bibinfo{person}{Jorge Nocedal}.}
  \bibinfo{year}{2018}\natexlab{}.
\newblock \showarticletitle{Optimization methods for large-scale machine
  learning}.
\newblock \bibinfo{journal}{\emph{Siam Review}} \bibinfo{volume}{60},
  \bibinfo{number}{2} (\bibinfo{year}{2018}).
\newblock


\bibitem[\protect\citeauthoryear{Bradski}{Bradski}{2000}]%
        {opencv_library}
\bibfield{author}{\bibinfo{person}{G. Bradski}.}
  \bibinfo{year}{2000}\natexlab{}.
\newblock \showarticletitle{{The OpenCV Library}}.
\newblock \bibinfo{journal}{\emph{Dr. Dobb's Journal of Software Tools}}
  (\bibinfo{year}{2000}).
\newblock


\bibitem[\protect\citeauthoryear{Cao, Felten, Karlin, and Li}{Cao
  et~al\mbox{.}}{1995}]%
        {cao1995study}
\bibfield{author}{\bibinfo{person}{Pei Cao}, \bibinfo{person}{Edward~W Felten},
  \bibinfo{person}{Anna~R Karlin}, {and} \bibinfo{person}{Kai Li}.}
  \bibinfo{year}{1995}\natexlab{}.
\newblock \showarticletitle{A study of integrated prefetching and caching
  strategies}.
\newblock \bibinfo{journal}{\emph{ACM SIGMETRICS Performance Evaluation
  Review}} \bibinfo{volume}{23}, \bibinfo{number}{1} (\bibinfo{year}{1995}).
\newblock


\bibitem[\protect\citeauthoryear{Cao, Felten, and Li}{Cao
  et~al\mbox{.}}{1994}]%
        {cao1994application}
\bibfield{author}{\bibinfo{person}{Pei Cao}, \bibinfo{person}{Edward~W Felten},
  {and} \bibinfo{person}{Kai Li}.} \bibinfo{year}{1994}\natexlab{}.
\newblock \showarticletitle{Application-Controlled File Caching Policies.}. In
  \bibinfo{booktitle}{\emph{USENIX Summer}}. \bibinfo{pages}{171--182}.
\newblock


\bibitem[\protect\citeauthoryear{Chen, Moreau, Jiang, Zheng, Yan, Shen, Cowan,
  Wang, Hu, Ceze, Guestrin, and Krishnamurthy}{Chen et~al\mbox{.}}{2018}]%
        {chen2018tvm}
\bibfield{author}{\bibinfo{person}{Tianqi Chen}, \bibinfo{person}{Thierry
  Moreau}, \bibinfo{person}{Ziheng Jiang}, \bibinfo{person}{Lianmin Zheng},
  \bibinfo{person}{Eddie Yan}, \bibinfo{person}{Haichen Shen},
  \bibinfo{person}{Meghan Cowan}, \bibinfo{person}{Leyuan Wang},
  \bibinfo{person}{Yuwei Hu}, \bibinfo{person}{Luis Ceze},
  \bibinfo{person}{Carlos Guestrin}, {and} \bibinfo{person}{Arvind
  Krishnamurthy}.} \bibinfo{year}{2018}\natexlab{}.
\newblock \showarticletitle{{TVM}: An end-to-end optimization stack for deep
  learning}. In \bibinfo{booktitle}{\emph{13th USENIX Symposium on Operating
  Systems Design and Implementation (OSDI)}}.
\newblock


\bibitem[\protect\citeauthoryear{Chetlur, Woolley, Vandermersch, Cohen, Tran,
  Catanzaro, and Shelhamer}{Chetlur et~al\mbox{.}}{2014}]%
        {chetlur2014cudnn}
\bibfield{author}{\bibinfo{person}{Sharan Chetlur}, \bibinfo{person}{Cliff
  Woolley}, \bibinfo{person}{Philippe Vandermersch}, \bibinfo{person}{Jonathan
  Cohen}, \bibinfo{person}{John Tran}, \bibinfo{person}{Bryan Catanzaro}, {and}
  \bibinfo{person}{Evan Shelhamer}.} \bibinfo{year}{2014}\natexlab{}.
\newblock \showarticletitle{cu{DNN}: Efficient primitives for deep learning}.
\newblock \bibinfo{journal}{\emph{arXiv preprint arXiv:1410.0759}}
  (\bibinfo{year}{2014}).
\newblock


\bibitem[\protect\citeauthoryear{Chien, Markidis, Sishtla, Santos, Herman,
  Narasimhamurthy, and Laure}{Chien et~al\mbox{.}}{2018}]%
        {chien2018characterizing}
\bibfield{author}{\bibinfo{person}{Steven~WD Chien}, \bibinfo{person}{Stefano
  Markidis}, \bibinfo{person}{Chaitanya~Prasad Sishtla}, \bibinfo{person}{Luis
  Santos}, \bibinfo{person}{Pawel Herman}, \bibinfo{person}{Sai
  Narasimhamurthy}, {and} \bibinfo{person}{Erwin Laure}.}
  \bibinfo{year}{2018}\natexlab{}.
\newblock \showarticletitle{Characterizing deep-learning {I/O} workloads in
  {TensorFlow}}. In \bibinfo{booktitle}{\emph{International Workshop on
  Parallel Data Storage \& Data Intensive Scalable Computing Systems
  (PDSW-DISCS)}}.
\newblock


\bibitem[\protect\citeauthoryear{Chowdhury, Zhu, Heer, Paredes, Moody,
  Goldstone, Mohror, and Yu}{Chowdhury et~al\mbox{.}}{2019}]%
        {chowdhury2019characterization}
\bibfield{author}{\bibinfo{person}{Fahim Chowdhury}, \bibinfo{person}{Yue Zhu},
  \bibinfo{person}{Todd Heer}, \bibinfo{person}{Saul Paredes},
  \bibinfo{person}{Adam Moody}, \bibinfo{person}{Robin Goldstone},
  \bibinfo{person}{Kathryn Mohror}, {and} \bibinfo{person}{Weikuan Yu}.}
  \bibinfo{year}{2019}\natexlab{}.
\newblock \showarticletitle{{I/O} characterization and performance evaluation
  of {BeeGFS} for deep learning}. In \bibinfo{booktitle}{\emph{Proceedings of
  the 48th International Conference on Parallel Processing}}.
  \bibinfo{pages}{1--10}.
\newblock


\bibitem[\protect\citeauthoryear{{CSCS}}{{CSCS}}{2020}]%
        {daint}
\bibfield{author}{\bibinfo{person}{{CSCS}}.} \bibinfo{year}{2020}\natexlab{}.
\newblock \bibinfo{title}{Piz Daint}.
\newblock
\newblock
\urldef\tempurl%
\url{https://www.cscs.ch/computers/piz-daint/}
\showURL{%
\tempurl}


\bibitem[\protect\citeauthoryear{Deng, Dong, Socher, Li, Li, and Fei-Fei}{Deng
  et~al\mbox{.}}{2009}]%
        {deng2009imagenet}
\bibfield{author}{\bibinfo{person}{Jia Deng}, \bibinfo{person}{Wei Dong},
  \bibinfo{person}{Richard Socher}, \bibinfo{person}{Li-Jia Li},
  \bibinfo{person}{Kai Li}, {and} \bibinfo{person}{Li Fei-Fei}.}
  \bibinfo{year}{2009}\natexlab{}.
\newblock \showarticletitle{{ImageNet}: A large-scale hierarchical image
  database}. In \bibinfo{booktitle}{\emph{Proceedings of the IEEE Conference on
  Computer Vision and Pattern Recognition (CVPR)}}.
\newblock


\bibitem[\protect\citeauthoryear{Dryden, Maruyama, Moon, Benson, Yoo, Snir, and
  Van~Essen}{Dryden et~al\mbox{.}}{2018}]%
        {dryden2018aluminum}
\bibfield{author}{\bibinfo{person}{Nikoli Dryden}, \bibinfo{person}{Naoya
  Maruyama}, \bibinfo{person}{Tim Moon}, \bibinfo{person}{Tom Benson},
  \bibinfo{person}{Andy Yoo}, \bibinfo{person}{Marc Snir}, {and}
  \bibinfo{person}{Brian Van~Essen}.} \bibinfo{year}{2018}\natexlab{}.
\newblock \showarticletitle{Aluminum: An Asynchronous, {GPU}-Aware
  Communication Library Optimized for Large-Scale Training of Deep Neural
  Networks on {HPC} Systems}.
\newblock \bibinfo{journal}{\emph{Workshop on Machine Learning in HPC
  Environments (MLHPC)}} (\bibinfo{year}{2018}).
\newblock


\bibitem[\protect\citeauthoryear{Fu, Venkata, Choudhury, Imam, and Yu}{Fu
  et~al\mbox{.}}{2017}]%
        {fu2017high}
\bibfield{author}{\bibinfo{person}{Huansong Fu},
  \bibinfo{person}{Manjunath~Gorentla Venkata}, \bibinfo{person}{Ahana~Roy
  Choudhury}, \bibinfo{person}{Neena Imam}, {and} \bibinfo{person}{Weikuan
  Yu}.} \bibinfo{year}{2017}\natexlab{}.
\newblock \showarticletitle{High-performance key-value store on {OpenSHMEM}}.
  In \bibinfo{booktitle}{\emph{2017 17th IEEE/ACM International Symposium on
  Cluster, Cloud and Grid Computing (CCGRID)}}. \bibinfo{pages}{559--568}.
\newblock


\bibitem[\protect\citeauthoryear{Google}{Google}{2020}]%
        {xla}
\bibfield{author}{\bibinfo{person}{Google}.} \bibinfo{year}{2020}\natexlab{}.
\newblock \bibinfo{title}{{XLA}: Optimizing Compiler for Machine Learning}.
\newblock
\newblock
\urldef\tempurl%
\url{https://www.tensorflow.org/xla}
\showURL{%
\tempurl}


\bibitem[\protect\citeauthoryear{Goyal, Doll{\'a}r, Girshick, Noordhuis,
  Wesolowski, Kyrola, Tulloch, Jia, and He}{Goyal et~al\mbox{.}}{2017}]%
        {goyal2017accurate}
\bibfield{author}{\bibinfo{person}{Priya Goyal}, \bibinfo{person}{Piotr
  Doll{\'a}r}, \bibinfo{person}{Ross Girshick}, \bibinfo{person}{Pieter
  Noordhuis}, \bibinfo{person}{Lukasz Wesolowski}, \bibinfo{person}{Aapo
  Kyrola}, \bibinfo{person}{Andrew Tulloch}, \bibinfo{person}{Yangqing Jia},
  {and} \bibinfo{person}{Kaiming He}.} \bibinfo{year}{2017}\natexlab{}.
\newblock \showarticletitle{Accurate, large minibatch {SGD}: Training
  {ImageNet} in 1 hour}.
\newblock \bibinfo{journal}{\emph{arXiv preprint arXiv:1706.02677}}
  (\bibinfo{year}{2017}).
\newblock


\bibitem[\protect\citeauthoryear{Gu, Lee, Zhang, Chowdhury, and Shin}{Gu
  et~al\mbox{.}}{2017}]%
        {gu2017efficient}
\bibfield{author}{\bibinfo{person}{Juncheng Gu}, \bibinfo{person}{Youngmoon
  Lee}, \bibinfo{person}{Yiwen Zhang}, \bibinfo{person}{Mosharaf Chowdhury},
  {and} \bibinfo{person}{Kang~G Shin}.} \bibinfo{year}{2017}\natexlab{}.
\newblock \showarticletitle{Efficient memory disaggregation with {Infiniswap}}.
  In \bibinfo{booktitle}{\emph{14th USENIX Symposium on Networked Systems
  Design and Implementation (NSDI 17)}}. \bibinfo{pages}{649--667}.
\newblock


\bibitem[\protect\citeauthoryear{G{\"u}rb{\"u}zbalaban, Ozdaglar, and
  Parrilo}{G{\"u}rb{\"u}zbalaban et~al\mbox{.}}{2019}]%
        {gurbuzbalaban2019random}
\bibfield{author}{\bibinfo{person}{Mert G{\"u}rb{\"u}zbalaban},
  \bibinfo{person}{Asu Ozdaglar}, {and} \bibinfo{person}{Pablo~A Parrilo}.}
  \bibinfo{year}{2019}\natexlab{}.
\newblock \showarticletitle{Why random reshuffling beats stochastic gradient
  descent}.
\newblock \bibinfo{journal}{\emph{Mathematical Programming}}
  (\bibinfo{year}{2019}), \bibinfo{pages}{1--36}.
\newblock


\bibitem[\protect\citeauthoryear{Hassidim}{Hassidim}{2010}]%
        {hassidim2010cache}
\bibfield{author}{\bibinfo{person}{Avinatan Hassidim}.}
  \bibinfo{year}{2010}\natexlab{}.
\newblock \showarticletitle{Cache Replacement Policies for Multicore
  Processors}. In \bibinfo{booktitle}{\emph{Innovations in Computer Science}}.
  \bibinfo{pages}{501--509}.
\newblock


\bibitem[\protect\citeauthoryear{He, Zhang, Ren, and Sun}{He
  et~al\mbox{.}}{2016}]%
        {he2016deep}
\bibfield{author}{\bibinfo{person}{Kaiming He}, \bibinfo{person}{Xiangyu
  Zhang}, \bibinfo{person}{Shaoqing Ren}, {and} \bibinfo{person}{Jian Sun}.}
  \bibinfo{year}{2016}\natexlab{}.
\newblock \showarticletitle{Deep residual learning for image recognition}. In
  \bibinfo{booktitle}{\emph{Proceedings of the IEEE Conference on Computer
  Vision and Pattern Recognition (CVPR)}}.
\newblock


\bibitem[\protect\citeauthoryear{Hoefler, Schneider, and Lumsdaine}{Hoefler
  et~al\mbox{.}}{2010}]%
        {hoefler2010characterizing}
\bibfield{author}{\bibinfo{person}{Torsten Hoefler}, \bibinfo{person}{Timo
  Schneider}, {and} \bibinfo{person}{Andrew Lumsdaine}.}
  \bibinfo{year}{2010}\natexlab{}.
\newblock \showarticletitle{Characterizing the influence of system noise on
  large-scale applications by simulation}. In
  \bibinfo{booktitle}{\emph{International Conference for High Performance
  Computing, Networking, Storage and Analysis (SC)}}. \bibinfo{pages}{1--11}.
\newblock


\bibitem[\protect\citeauthoryear{Howison, Koziol, Knaak, Mainzer, and
  Shalf}{Howison et~al\mbox{.}}{2010}]%
        {howison2010tuning}
\bibfield{author}{\bibinfo{person}{Mark Howison}, \bibinfo{person}{Quincey
  Koziol}, \bibinfo{person}{David Knaak}, \bibinfo{person}{John Mainzer}, {and}
  \bibinfo{person}{John Shalf}.} \bibinfo{year}{2010}\natexlab{}.
\newblock \showarticletitle{Tuning {HDF5} for {Lustre} file systems}. In
  \bibinfo{booktitle}{\emph{Workshop on Interfaces and Abstractions for
  Scientific Data Storage}}.
\newblock


\bibitem[\protect\citeauthoryear{{IOR team}}{{IOR team}}{2021}]%
        {ior}
\bibfield{author}{\bibinfo{person}{{IOR team}}.}
  \bibinfo{year}{2021}\natexlab{}.
\newblock \bibinfo{title}{{HPC} {IO} benchmark repository}.
\newblock
\newblock
\urldef\tempurl%
\url{https://github.com/hpc/ior}
\showURL{%
\tempurl}


\bibitem[\protect\citeauthoryear{Islam, Wasi-ur Rahman, Lu, and Panda}{Islam
  et~al\mbox{.}}{2016}]%
        {islam2016high}
\bibfield{author}{\bibinfo{person}{Nusrat~Sharmin Islam}, \bibinfo{person}{Md
  Wasi-ur Rahman}, \bibinfo{person}{Xiaoyi Lu}, {and}
  \bibinfo{person}{Dhabaleswar~K Panda}.} \bibinfo{year}{2016}\natexlab{}.
\newblock \showarticletitle{High performance design for {HDFS} with
  byte-addressability of {NVM} and {RDMA}}. In
  \bibinfo{booktitle}{\emph{Proceedings of the 2016 International Conference on
  Supercomputing}}. \bibinfo{pages}{1--14}.
\newblock


\bibitem[\protect\citeauthoryear{Ivanov, Dryden, Ben-Nun, Li, and
  Hoefler}{Ivanov et~al\mbox{.}}{2021}]%
        {ivanov2021data}
\bibfield{author}{\bibinfo{person}{Andrei Ivanov}, \bibinfo{person}{Nikoli
  Dryden}, \bibinfo{person}{Tal Ben-Nun}, \bibinfo{person}{Shigang Li}, {and}
  \bibinfo{person}{Torsten Hoefler}.} \bibinfo{year}{2021}\natexlab{}.
\newblock \showarticletitle{Data Movement Is All You Need: A Case Study on
  Optimizing Transformers}. In \bibinfo{booktitle}{\emph{Proceedings of the
  Fourth Conference on Machine Learning and Systems (MLSys)}}.
\newblock


\bibitem[\protect\citeauthoryear{Jacobs, Van~Essen, Hysom, Yeom, Moon, Anirudh,
  Thiagaranjan, Liu, Bremer, Gaffney, et~al\mbox{.}}{Jacobs
  et~al\mbox{.}}{2019}]%
        {jacobs2019parallelizing}
\bibfield{author}{\bibinfo{person}{Sam~Ade Jacobs}, \bibinfo{person}{Brian
  Van~Essen}, \bibinfo{person}{David Hysom}, \bibinfo{person}{Jae-Seung Yeom},
  \bibinfo{person}{Tim Moon}, \bibinfo{person}{Rushil Anirudh},
  \bibinfo{person}{Jayaraman~J Thiagaranjan}, \bibinfo{person}{Shusen Liu},
  \bibinfo{person}{Peer-Timo Bremer}, \bibinfo{person}{Jim Gaffney},
  {et~al\mbox{.}}} \bibinfo{year}{2019}\natexlab{}.
\newblock \showarticletitle{Parallelizing training of deep generative models on
  massive scientific datasets}. In \bibinfo{booktitle}{\emph{IEEE International
  Conference on Cluster Computing (CLUSTER)}}.
\newblock


\bibitem[\protect\citeauthoryear{Jia, Shelhamer, Donahue, Karayev, Long,
  Girshick, Guadarrama, and Darrell}{Jia et~al\mbox{.}}{2014}]%
        {jia2014caffe}
\bibfield{author}{\bibinfo{person}{Yangqing Jia}, \bibinfo{person}{Evan
  Shelhamer}, \bibinfo{person}{Jeff Donahue}, \bibinfo{person}{Sergey Karayev},
  \bibinfo{person}{Jonathan Long}, \bibinfo{person}{Ross Girshick},
  \bibinfo{person}{Sergio Guadarrama}, {and} \bibinfo{person}{Trevor Darrell}.}
  \bibinfo{year}{2014}\natexlab{}.
\newblock \showarticletitle{Caffe: Convolutional Architecture for Fast Feature
  Embedding}.
\newblock \bibinfo{journal}{\emph{arXiv preprint arXiv:1408.5093}}
  (\bibinfo{year}{2014}).
\newblock


\bibitem[\protect\citeauthoryear{Jin, Barve, and Trivedi}{Jin
  et~al\mbox{.}}{2002}]%
        {jin2002simple}
\bibfield{author}{\bibinfo{person}{Wei Jin}, \bibinfo{person}{Rakesh~D Barve},
  {and} \bibinfo{person}{Kishor~S Trivedi}.} \bibinfo{year}{2002}\natexlab{}.
\newblock \showarticletitle{A simple characterization of provably efficient
  prefetching algorithms}. In \bibinfo{booktitle}{\emph{Proceedings of the
  International Conference on Dependable Systems and Networks}}.
\newblock


\bibitem[\protect\citeauthoryear{Jose, Subramoni, Luo, Zhang, Huang, Wasi-ur
  Rahman, Islam, Ouyang, Wang, Sur, et~al\mbox{.}}{Jose et~al\mbox{.}}{2011}]%
        {jose2011memcached}
\bibfield{author}{\bibinfo{person}{Jithin Jose}, \bibinfo{person}{Hari
  Subramoni}, \bibinfo{person}{Miao Luo}, \bibinfo{person}{Minjia Zhang},
  \bibinfo{person}{Jian Huang}, \bibinfo{person}{Md Wasi-ur Rahman},
  \bibinfo{person}{Nusrat~S Islam}, \bibinfo{person}{Xiangyong Ouyang},
  \bibinfo{person}{Hao Wang}, \bibinfo{person}{Sayantan Sur}, {et~al\mbox{.}}}
  \bibinfo{year}{2011}\natexlab{}.
\newblock \showarticletitle{Memcached design on high performance {RDMA} capable
  interconnects}. In \bibinfo{booktitle}{\emph{2011 International Conference on
  Parallel Processing}}. \bibinfo{pages}{743--752}.
\newblock


\bibitem[\protect\citeauthoryear{Jouppi, Young, Patil, Patterson, Agrawal,
  Bajwa, Bates, Bhatia, Boden, Borchers, et~al\mbox{.}}{Jouppi
  et~al\mbox{.}}{2017}]%
        {jouppi2017datacenter}
\bibfield{author}{\bibinfo{person}{Norman~P Jouppi}, \bibinfo{person}{Cliff
  Young}, \bibinfo{person}{Nishant Patil}, \bibinfo{person}{David Patterson},
  \bibinfo{person}{Gaurav Agrawal}, \bibinfo{person}{Raminder Bajwa},
  \bibinfo{person}{Sarah Bates}, \bibinfo{person}{Suresh Bhatia},
  \bibinfo{person}{Nan Boden}, \bibinfo{person}{Al Borchers}, {et~al\mbox{.}}}
  \bibinfo{year}{2017}\natexlab{}.
\newblock \showarticletitle{In-datacenter performance analysis of a tensor
  processing unit}. In \bibinfo{booktitle}{\emph{Proceedings of the 44th Annual
  International Symposium on Computer Architecture (ISCA)}}.
\newblock


\bibitem[\protect\citeauthoryear{Kamali and Xu}{Kamali and Xu}{2020}]%
        {kamali2020multicore}
\bibfield{author}{\bibinfo{person}{Shahin Kamali} {and} \bibinfo{person}{Helen
  Xu}.} \bibinfo{year}{2020}\natexlab{}.
\newblock \showarticletitle{Multicore Paging Algorithms Cannot Be Competitive}.
  In \bibinfo{booktitle}{\emph{Proceedings of the 32nd ACM Symposium on
  Parallelism in Algorithms and Architectures (SPAA)}}.
  \bibinfo{pages}{547--549}.
\newblock


\bibitem[\protect\citeauthoryear{Kamali and Xu}{Kamali and Xu}{2021}]%
        {kamali2021beyond}
\bibfield{author}{\bibinfo{person}{Shahin Kamali} {and} \bibinfo{person}{Helen
  Xu}.} \bibinfo{year}{2021}\natexlab{}.
\newblock \showarticletitle{Beyond worst-case analysis of multicore caching
  strategies}. In \bibinfo{booktitle}{\emph{Symposium on Algorithmic Principles
  of Computer Systems (APOCS)}}. \bibinfo{pages}{1--15}.
\newblock


\bibitem[\protect\citeauthoryear{Katti and Ramachandran}{Katti and
  Ramachandran}{2012}]%
        {katti2012competitive}
\bibfield{author}{\bibinfo{person}{Anil~Kumar Katti} {and}
  \bibinfo{person}{Vijaya Ramachandran}.} \bibinfo{year}{2012}\natexlab{}.
\newblock \showarticletitle{Competitive cache replacement strategies for shared
  cache environments}. In \bibinfo{booktitle}{\emph{International Parallel and
  Distributed Processing Symposium (IPDPS)}}. \bibinfo{pages}{215--226}.
\newblock


\bibitem[\protect\citeauthoryear{Kimbrel and Karlin}{Kimbrel and
  Karlin}{2000}]%
        {kimbrel2000near}
\bibfield{author}{\bibinfo{person}{Tracy Kimbrel} {and} \bibinfo{person}{Anna~R
  Karlin}.} \bibinfo{year}{2000}\natexlab{}.
\newblock \showarticletitle{Near-optimal parallel prefetching and caching}.
\newblock \bibinfo{journal}{\emph{SIAM Journal on computing}}
  \bibinfo{volume}{29}, \bibinfo{number}{4} (\bibinfo{year}{2000}).
\newblock


\bibitem[\protect\citeauthoryear{Koch, Pfannm{\"u}ller, Rizk, Hausheer, and
  Steinmetz}{Koch et~al\mbox{.}}{2018}]%
        {koch2018category}
\bibfield{author}{\bibinfo{person}{Christian Koch}, \bibinfo{person}{Johannes
  Pfannm{\"u}ller}, \bibinfo{person}{Amr Rizk}, \bibinfo{person}{David
  Hausheer}, {and} \bibinfo{person}{Ralf Steinmetz}.}
  \bibinfo{year}{2018}\natexlab{}.
\newblock \showarticletitle{Category-aware hierarchical caching for
  video-on-demand content on {YouTube}}. In
  \bibinfo{booktitle}{\emph{Proceedings of the 9th ACM Multimedia Systems
  Conference (MMSys)}}.
\newblock


\bibitem[\protect\citeauthoryear{Kurth, Treichler, Romero, Mudigonda, Luehr,
  Phillips, Mahesh, Matheson, Deslippe, Fatica, et~al\mbox{.}}{Kurth
  et~al\mbox{.}}{2018}]%
        {kurth2018exascale}
\bibfield{author}{\bibinfo{person}{Thorsten Kurth}, \bibinfo{person}{Sean
  Treichler}, \bibinfo{person}{Joshua Romero}, \bibinfo{person}{Mayur
  Mudigonda}, \bibinfo{person}{Nathan Luehr}, \bibinfo{person}{Everett
  Phillips}, \bibinfo{person}{Ankur Mahesh}, \bibinfo{person}{Michael
  Matheson}, \bibinfo{person}{Jack Deslippe}, \bibinfo{person}{Massimiliano
  Fatica}, {et~al\mbox{.}}} \bibinfo{year}{2018}\natexlab{}.
\newblock \showarticletitle{Exascale deep learning for climate analytics}. In
  \bibinfo{booktitle}{\emph{International Conference for High Performance
  Computing, Networking, Storage and Analysis (SC)}}.
\newblock


\bibitem[\protect\citeauthoryear{Kuznetsova, Rom, Alldrin, Uijlings, Krasin,
  Pont-Tuset, Kamali, Popov, Malloci, Duerig, et~al\mbox{.}}{Kuznetsova
  et~al\mbox{.}}{2018}]%
        {kuznetsova2018open}
\bibfield{author}{\bibinfo{person}{Alina Kuznetsova}, \bibinfo{person}{Hassan
  Rom}, \bibinfo{person}{Neil Alldrin}, \bibinfo{person}{Jasper Uijlings},
  \bibinfo{person}{Ivan Krasin}, \bibinfo{person}{Jordi Pont-Tuset},
  \bibinfo{person}{Shahab Kamali}, \bibinfo{person}{Stefan Popov},
  \bibinfo{person}{Matteo Malloci}, \bibinfo{person}{Tom Duerig},
  {et~al\mbox{.}}} \bibinfo{year}{2018}\natexlab{}.
\newblock \showarticletitle{The open images dataset v4: Unified image
  classification, object detection, and visual relationship detection at
  scale}.
\newblock \bibinfo{journal}{\emph{arXiv preprint arXiv:1811.00982}}
  (\bibinfo{year}{2018}).
\newblock


\bibitem[\protect\citeauthoryear{LeCun, Bottou, Bengio, and Haffner}{LeCun
  et~al\mbox{.}}{1998}]%
        {lecun1998gradient}
\bibfield{author}{\bibinfo{person}{Yann LeCun}, \bibinfo{person}{L{\'e}on
  Bottou}, \bibinfo{person}{Yoshua Bengio}, {and} \bibinfo{person}{Patrick
  Haffner}.} \bibinfo{year}{1998}\natexlab{}.
\newblock \showarticletitle{Gradient-based learning applied to document
  recognition}.
\newblock \bibinfo{journal}{\emph{Proc. IEEE}} \bibinfo{volume}{86},
  \bibinfo{number}{11} (\bibinfo{year}{1998}).
\newblock


\bibitem[\protect\citeauthoryear{Li, Liao, Choudhary, Ross, Thakur, Gropp,
  Latham, Siegel, Gallagher, and Zingale}{Li et~al\mbox{.}}{2003}]%
        {li2003parallel}
\bibfield{author}{\bibinfo{person}{Jianwei Li}, \bibinfo{person}{Wei-keng
  Liao}, \bibinfo{person}{Alok Choudhary}, \bibinfo{person}{Robert Ross},
  \bibinfo{person}{Rajeev Thakur}, \bibinfo{person}{William Gropp},
  \bibinfo{person}{Robert Latham}, \bibinfo{person}{Andrew Siegel},
  \bibinfo{person}{Brad Gallagher}, {and} \bibinfo{person}{Michael Zingale}.}
  \bibinfo{year}{2003}\natexlab{}.
\newblock \showarticletitle{Parallel {netCDF}: A high-performance scientific
  {I/O} interface}. In \bibinfo{booktitle}{\emph{International Conference for
  High Performance Computing, Networking, Storage and Analysis (SC)}}.
\newblock


\bibitem[\protect\citeauthoryear{{Livermore Computing}}{{Livermore
  Computing}}{2020}]%
        {lassen}
\bibfield{author}{\bibinfo{person}{{Livermore Computing}}.}
  \bibinfo{year}{2020}\natexlab{}.
\newblock \bibinfo{title}{Lassen}.
\newblock
\newblock
\urldef\tempurl%
\url{https://hpc.llnl.gov/hardware/platforms/lassen}
\showURL{%
\tempurl}


\bibitem[\protect\citeauthoryear{L{\'o}pez-Ortiz and Salinger}{L{\'o}pez-Ortiz
  and Salinger}{2012}]%
        {lopez2012paging}
\bibfield{author}{\bibinfo{person}{Alejandro L{\'o}pez-Ortiz} {and}
  \bibinfo{person}{Alejandro Salinger}.} \bibinfo{year}{2012}\natexlab{}.
\newblock \showarticletitle{Paging for multi-core shared caches}. In
  \bibinfo{booktitle}{\emph{Proceedings of the 3rd Innovations in Theoretical
  Computer Science Conference}}. \bibinfo{pages}{113--127}.
\newblock


\bibitem[\protect\citeauthoryear{Lu, Shu, Chen, and Li}{Lu
  et~al\mbox{.}}{2017}]%
        {lu2017octopus}
\bibfield{author}{\bibinfo{person}{Youyou Lu}, \bibinfo{person}{Jiwu Shu},
  \bibinfo{person}{Youmin Chen}, {and} \bibinfo{person}{Tao Li}.}
  \bibinfo{year}{2017}\natexlab{}.
\newblock \showarticletitle{Octopus: an {RDMA}-enabled distributed persistent
  memory file system}. In \bibinfo{booktitle}{\emph{2017 USENIX Annual
  Technical Conference (USENIX ATC 17)}}. \bibinfo{pages}{773--785}.
\newblock


\bibitem[\protect\citeauthoryear{Mahajan, Girshick, Ramanathan, He, Paluri, Li,
  Bharambe, and van~der Maaten}{Mahajan et~al\mbox{.}}{2018}]%
        {mahajan2018exploring}
\bibfield{author}{\bibinfo{person}{Dhruv Mahajan}, \bibinfo{person}{Ross
  Girshick}, \bibinfo{person}{Vignesh Ramanathan}, \bibinfo{person}{Kaiming
  He}, \bibinfo{person}{Manohar Paluri}, \bibinfo{person}{Yixuan Li},
  \bibinfo{person}{Ashwin Bharambe}, {and} \bibinfo{person}{Laurens van~der
  Maaten}.} \bibinfo{year}{2018}\natexlab{}.
\newblock \showarticletitle{Exploring the limits of weakly supervised
  pretraining}. In \bibinfo{booktitle}{\emph{Proceedings of the European
  Conference on Computer Vision (ECCV)}}.
\newblock


\bibitem[\protect\citeauthoryear{Markidis, Der~Chien, Laure, Peng, and
  Vetter}{Markidis et~al\mbox{.}}{2018}]%
        {markidis2018nvidia}
\bibfield{author}{\bibinfo{person}{Stefano Markidis},
  \bibinfo{person}{Steven~Wei Der~Chien}, \bibinfo{person}{Erwin Laure},
  \bibinfo{person}{Ivy~Bo Peng}, {and} \bibinfo{person}{Jeffrey~S Vetter}.}
  \bibinfo{year}{2018}\natexlab{}.
\newblock \showarticletitle{{NVIDIA} tensor core programmability, performance
  \& precision}. In \bibinfo{booktitle}{\emph{2018 IEEE International Parallel
  and Distributed Processing Symposium Workshops (IPDPSW)}}.
\newblock


\bibitem[\protect\citeauthoryear{Mathuriya, Bard, Mendygral, Meadows, Arnemann,
  Shao, He, K{\"a}rn{\"a}, Moise, Pennycook, et~al\mbox{.}}{Mathuriya
  et~al\mbox{.}}{2018}]%
        {mathuriya2018cosmoflow}
\bibfield{author}{\bibinfo{person}{Amrita Mathuriya}, \bibinfo{person}{Deborah
  Bard}, \bibinfo{person}{Peter Mendygral}, \bibinfo{person}{Lawrence Meadows},
  \bibinfo{person}{James Arnemann}, \bibinfo{person}{Lei Shao},
  \bibinfo{person}{Siyu He}, \bibinfo{person}{Tuomas K{\"a}rn{\"a}},
  \bibinfo{person}{Diana Moise}, \bibinfo{person}{Simon~J Pennycook},
  {et~al\mbox{.}}} \bibinfo{year}{2018}\natexlab{}.
\newblock \showarticletitle{{CosmoFlow}: Using deep learning to learn the
  universe at scale}. In \bibinfo{booktitle}{\emph{International Conference for
  High Performance Computing, Networking, Storage and Analysis (SC)}}.
\newblock


\bibitem[\protect\citeauthoryear{McCalpin et~al\mbox{.}}{McCalpin
  et~al\mbox{.}}{1995}]%
        {mccalpin1995memory}
\bibfield{author}{\bibinfo{person}{John~D McCalpin} {et~al\mbox{.}}}
  \bibinfo{year}{1995}\natexlab{}.
\newblock \showarticletitle{Memory bandwidth and machine balance in current
  high performance computers}.
\newblock \bibinfo{journal}{\emph{IEEE computer society technical committee on
  computer architecture (TCCA) newsletter}} \bibinfo{volume}{2},
  \bibinfo{number}{19--25} (\bibinfo{year}{1995}).
\newblock


\bibitem[\protect\citeauthoryear{MLCommons}{MLCommons}{2020}]%
        {mlperf-hpc}
\bibfield{author}{\bibinfo{person}{MLCommons}.}
  \bibinfo{year}{2020}\natexlab{}.
\newblock \bibinfo{title}{{MLPerf} {HPC} Training Rules}.
\newblock
\newblock
\urldef\tempurl%
\url{https://github.com/mlperf-hpc/training_policies/blob/hpc-0.5.0/hpc_training_rules.adoc}
\showURL{%
\tempurl}


\bibitem[\protect\citeauthoryear{Morgan}{Morgan}{2021}]%
        {rabbit}
\bibfield{author}{\bibinfo{person}{Timothy~Prickett Morgan}.}
  \bibinfo{year}{2021}\natexlab{}.
\newblock \bibinfo{title}{Livermore Converges a Slew of New Ideas for Exascale
  Storage}.
\newblock
\newblock
\urldef\tempurl%
\url{https://www.nextplatform.com/2021/03/09/livermore-converges-a-slew-of-new-ideas-for-exascale-storage/}
\showURL{%
\tempurl}


\bibitem[\protect\citeauthoryear{Murray, Simsa, Klimovic, and Indyk}{Murray
  et~al\mbox{.}}{2021}]%
        {murray2021tf}
\bibfield{author}{\bibinfo{person}{Derek~G Murray}, \bibinfo{person}{Jiri
  Simsa}, \bibinfo{person}{Ana Klimovic}, {and} \bibinfo{person}{Ihor Indyk}.}
  \bibinfo{year}{2021}\natexlab{}.
\newblock \showarticletitle{tf.data: A Machine Learning Data Processing
  Framework}.
\newblock \bibinfo{journal}{\emph{arXiv preprint arXiv:2101.12127}}
  (\bibinfo{year}{2021}).
\newblock


\bibitem[\protect\citeauthoryear{Nvidia}{Nvidia}{2020a}]%
        {nccl}
\bibfield{author}{\bibinfo{person}{Nvidia}.} \bibinfo{year}{2020}\natexlab{a}.
\newblock \bibinfo{title}{{NVIDIA} Collective Communications Library}.
\newblock
\newblock
\urldef\tempurl%
\url{https://developer.nvidia.com/nccl}
\showURL{%
\tempurl}


\bibitem[\protect\citeauthoryear{Nvidia}{Nvidia}{2020b}]%
        {dali}
\bibfield{author}{\bibinfo{person}{Nvidia}.} \bibinfo{year}{2020}\natexlab{b}.
\newblock \bibinfo{title}{{NVIDIA} Data Loading Library}.
\newblock
\newblock
\urldef\tempurl%
\url{https://developer.nvidia.com/DALI}
\showURL{%
\tempurl}


\bibitem[\protect\citeauthoryear{{OpenAI}}{{OpenAI}}{2018}]%
        {openai2018aiandcompute}
\bibfield{author}{\bibinfo{person}{{OpenAI}}.} \bibinfo{year}{2018}\natexlab{}.
\newblock \bibinfo{title}{{AI} and Compute}.
\newblock
\newblock
\urldef\tempurl%
\url{https://openai.com/blog/ai-and-compute/}
\showURL{%
\tempurl}


\bibitem[\protect\citeauthoryear{Oyama, Maruyama, Dryden, McCarthy, Harrington,
  Balewski, Matsuoka, Nugent, and Van~Essen}{Oyama et~al\mbox{.}}{2020}]%
        {oyama2020case}
\bibfield{author}{\bibinfo{person}{Yosuke Oyama}, \bibinfo{person}{Naoya
  Maruyama}, \bibinfo{person}{Nikoli Dryden}, \bibinfo{person}{Erin McCarthy},
  \bibinfo{person}{Peter Harrington}, \bibinfo{person}{Jan Balewski},
  \bibinfo{person}{Satoshi Matsuoka}, \bibinfo{person}{Peter Nugent}, {and}
  \bibinfo{person}{Brian Van~Essen}.} \bibinfo{year}{2020}\natexlab{}.
\newblock \showarticletitle{The case for strong scaling in deep learning:
  Training large {3D} {CNNs} with hybrid parallelism}.
\newblock \bibinfo{journal}{\emph{IEEE Transactions on Parallel and Distributed
  Systems}} (\bibinfo{year}{2020}).
\newblock


\bibitem[\protect\citeauthoryear{Paszke, Gross, Massa, Lerer, Bradbury, Chanan,
  Killeen, Lin, Gimelshein, Antiga, et~al\mbox{.}}{Paszke
  et~al\mbox{.}}{2019}]%
        {paszke2019pytorch}
\bibfield{author}{\bibinfo{person}{Adam Paszke}, \bibinfo{person}{Sam Gross},
  \bibinfo{person}{Francisco Massa}, \bibinfo{person}{Adam Lerer},
  \bibinfo{person}{James Bradbury}, \bibinfo{person}{Gregory Chanan},
  \bibinfo{person}{Trevor Killeen}, \bibinfo{person}{Zeming Lin},
  \bibinfo{person}{Natalia Gimelshein}, \bibinfo{person}{Luca Antiga},
  {et~al\mbox{.}}} \bibinfo{year}{2019}\natexlab{}.
\newblock \showarticletitle{{PyTorch}: An imperative style, high-performance
  deep learning library}. In \bibinfo{booktitle}{\emph{Advances in Neural
  Information Processing Systems (NeurIPS)}}.
\newblock


\bibitem[\protect\citeauthoryear{Petrini, Kerbyson, and Pakin}{Petrini
  et~al\mbox{.}}{2003}]%
        {petrini2003case}
\bibfield{author}{\bibinfo{person}{Fabrizio Petrini}, \bibinfo{person}{Darren~J
  Kerbyson}, {and} \bibinfo{person}{Scott Pakin}.}
  \bibinfo{year}{2003}\natexlab{}.
\newblock \showarticletitle{The case of the missing supercomputer performance:
  Achieving optimal performance on the 8,192 processors of {ASCI Q}}. In
  \bibinfo{booktitle}{\emph{Proceedings of the 2003 ACM/IEEE conference on
  Supercomputing (SC)}}. \bibinfo{pages}{55--55}.
\newblock


\bibitem[\protect\citeauthoryear{Pumma, Si, Feng, and Balaji}{Pumma
  et~al\mbox{.}}{2019}]%
        {pumma2019scalable}
\bibfield{author}{\bibinfo{person}{Sarunya Pumma}, \bibinfo{person}{Min Si},
  \bibinfo{person}{Wu-Chun Feng}, {and} \bibinfo{person}{Pavan Balaji}.}
  \bibinfo{year}{2019}\natexlab{}.
\newblock \showarticletitle{Scalable Deep Learning via {I/O} Analysis and
  Optimization}.
\newblock \bibinfo{journal}{\emph{ACM Transactions on Parallel Computing
  (TOPC)}} \bibinfo{volume}{6}, \bibinfo{number}{2} (\bibinfo{year}{2019}).
\newblock


\bibitem[\protect\citeauthoryear{{RIKEN Center for Computational
  Science}}{{RIKEN Center for Computational Science}}{2021}]%
        {fugaku}
\bibfield{author}{\bibinfo{person}{{RIKEN Center for Computational Science}}.}
  \bibinfo{year}{2021}\natexlab{}.
\newblock \bibinfo{title}{About Fugaku}.
\newblock
\newblock
\urldef\tempurl%
\url{https://www.r-ccs.riken.jp/en/fugaku/about/}
\showURL{%
\tempurl}


\bibitem[\protect\citeauthoryear{Russakovsky, Deng, Su, Krause, Satheesh, Ma,
  Huang, Karpathy, Khosla, Bernstein, Berg, and Fei-Fei}{Russakovsky
  et~al\mbox{.}}{2015}]%
        {ILSVRC15}
\bibfield{author}{\bibinfo{person}{Olga Russakovsky}, \bibinfo{person}{Jia
  Deng}, \bibinfo{person}{Hao Su}, \bibinfo{person}{Jonathan Krause},
  \bibinfo{person}{Sanjeev Satheesh}, \bibinfo{person}{Sean Ma},
  \bibinfo{person}{Zhiheng Huang}, \bibinfo{person}{Andrej Karpathy},
  \bibinfo{person}{Aditya Khosla}, \bibinfo{person}{Michael Bernstein},
  \bibinfo{person}{Alexander~C. Berg}, {and} \bibinfo{person}{Li Fei-Fei}.}
  \bibinfo{year}{2015}\natexlab{}.
\newblock \showarticletitle{{ImageNet Large Scale Visual Recognition
  Challenge}}.
\newblock \bibinfo{journal}{\emph{International Journal of Computer Vision
  (IJCV)}} \bibinfo{volume}{115}, \bibinfo{number}{3} (\bibinfo{year}{2015}).
\newblock


\bibitem[\protect\citeauthoryear{Sergeev and Balso}{Sergeev and Balso}{2018}]%
        {sergeev2018horovod}
\bibfield{author}{\bibinfo{person}{Alexander Sergeev} {and}
  \bibinfo{person}{Mike~Del Balso}.} \bibinfo{year}{2018}\natexlab{}.
\newblock \showarticletitle{Horovod: fast and easy distributed deep learning in
  {TensorFlow}}.
\newblock \bibinfo{journal}{\emph{arXiv preprint arXiv:1802.05799}}
  (\bibinfo{year}{2018}).
\newblock


\bibitem[\protect\citeauthoryear{Strubell, Ganesh, and McCallum}{Strubell
  et~al\mbox{.}}{2019}]%
        {strubell2019energy}
\bibfield{author}{\bibinfo{person}{Emma Strubell}, \bibinfo{person}{Ananya
  Ganesh}, {and} \bibinfo{person}{Andrew McCallum}.}
  \bibinfo{year}{2019}\natexlab{}.
\newblock \showarticletitle{Energy and policy considerations for deep learning
  in {NLP}}. In \bibinfo{booktitle}{\emph{Proceedings of the 57th Annual
  Meeting of the Association for Computational Linguistics (ACL)}}.
\newblock


\bibitem[\protect\citeauthoryear{Stuedi, Trivedi, Pfefferle, Stoica, Metzler,
  Ioannou, and Koltsidas}{Stuedi et~al\mbox{.}}{2017}]%
        {stuedi2017crail}
\bibfield{author}{\bibinfo{person}{Patrick Stuedi}, \bibinfo{person}{Animesh
  Trivedi}, \bibinfo{person}{Jonas Pfefferle}, \bibinfo{person}{Radu Stoica},
  \bibinfo{person}{Bernard Metzler}, \bibinfo{person}{Nikolas Ioannou}, {and}
  \bibinfo{person}{Ioannis Koltsidas}.} \bibinfo{year}{2017}\natexlab{}.
\newblock \showarticletitle{Crail: A High-Performance {I/O} Architecture for
  Distributed Data Processing.}
\newblock \bibinfo{journal}{\emph{IEEE Data Eng. Bull.}} \bibinfo{volume}{40},
  \bibinfo{number}{1} (\bibinfo{year}{2017}), \bibinfo{pages}{38--49}.
\newblock


\bibitem[\protect\citeauthoryear{Sun, Shrivastava, Singh, and Gupta}{Sun
  et~al\mbox{.}}{2017}]%
        {sun2017revisiting}
\bibfield{author}{\bibinfo{person}{Chen Sun}, \bibinfo{person}{Abhinav
  Shrivastava}, \bibinfo{person}{Saurabh Singh}, {and} \bibinfo{person}{Abhinav
  Gupta}.} \bibinfo{year}{2017}\natexlab{}.
\newblock \showarticletitle{Revisiting unreasonable effectiveness of data in
  deep learning era}. In \bibinfo{booktitle}{\emph{Proceedings of the IEEE
  International Conference on Computer Vision (ICCV)}}.
\newblock


\bibitem[\protect\citeauthoryear{Thakur, Gropp, and Lusk}{Thakur
  et~al\mbox{.}}{1999}]%
        {thakur1999data}
\bibfield{author}{\bibinfo{person}{Rajeev Thakur}, \bibinfo{person}{William
  Gropp}, {and} \bibinfo{person}{Ewing Lusk}.} \bibinfo{year}{1999}\natexlab{}.
\newblock \showarticletitle{Data sieving and collective {I/O} in {ROMIO}}. In
  \bibinfo{booktitle}{\emph{Seventh Symposium on the Frontiers of Massively
  Parallel Computation}}.
\newblock


\bibitem[\protect\citeauthoryear{Yang and Cong}{Yang and Cong}{2019}]%
        {yang2019accelerating}
\bibfield{author}{\bibinfo{person}{Chih-Chieh Yang} {and}
  \bibinfo{person}{Guojing Cong}.} \bibinfo{year}{2019}\natexlab{}.
\newblock \showarticletitle{Accelerating Data Loading in Deep Neural Network
  Training}. In \bibinfo{booktitle}{\emph{International Conference on High
  Performance Computing, Data, and Analytics (HiPC)}}.
\newblock


\bibitem[\protect\citeauthoryear{Zhu, Chowdhury, Fu, Moody, Mohror, Sato, and
  Yu}{Zhu et~al\mbox{.}}{2018}]%
        {zhu2018entropy}
\bibfield{author}{\bibinfo{person}{Yue Zhu}, \bibinfo{person}{Fahim Chowdhury},
  \bibinfo{person}{Huansong Fu}, \bibinfo{person}{Adam Moody},
  \bibinfo{person}{Kathryn Mohror}, \bibinfo{person}{Kento Sato}, {and}
  \bibinfo{person}{Weikuan Yu}.} \bibinfo{year}{2018}\natexlab{}.
\newblock \showarticletitle{Entropy-aware {I/O} pipelining for large-scale deep
  learning on {HPC} systems}. In \bibinfo{booktitle}{\emph{International
  Symposium on Modeling, Analysis, and Simulation of Computer and
  Telecommunication Systems (MASCOTS)}}.
\newblock


\end{thebibliography}


\end{document}